\documentclass[11pt]{article}

\usepackage[T1]{fontenc}
\usepackage[utf8]{inputenc}
\usepackage{lmodern}
\usepackage{microtype}
\usepackage[a4paper,margin=28mm]{geometry}
\usepackage{amsmath,amssymb,amsthm,mathtools}
\usepackage{booktabs,array}
\usepackage[dvipsnames]{xcolor}
\usepackage{enumitem}
\usepackage{hyperref}

\hypersetup{
  colorlinks=true,
  linkcolor=NavyBlue,
  urlcolor=NavyBlue,
  citecolor=NavyBlue,
  pdftitle={SparseStack Is an Optimal Oblivious Subspace Embedding},
  pdfauthor={Diar Heidary},
  pdfkeywords={oblivious subspace embeddings, sparse embeddings, SparseStack, OSNAP, interacting Fock space, Lean 4}
}

\setlist{itemsep=0.25em,topsep=0.35em}
\allowdisplaybreaks

\newtheorem{theorem}{Theorem}[section]
\newtheorem{lemma}[theorem]{Lemma}
\newtheorem{proposition}[theorem]{Proposition}
\newtheorem{corollary}[theorem]{Corollary}
\theoremstyle{definition}

\newtheorem{example}[theorem]{Example}
\theoremstyle{remark}
\newtheorem{remark}[theorem]{Remark}

\newcommand{\R}{\mathbb R}
\newcommand{\E}{\mathbb E}
\newcommand{\Prob}{\mathbb P}
\newcommand{\T}{\mathsf T}
\newcommand{\Hcal}{\mathcal H}
\newcommand{\Kcal}{\mathcal K}
\newcommand{\Fcal}{\mathcal F}
\newcommand{\Mcal}{\mathcal M}
\newcommand{\Pcal}{\mathcal P}
\newcommand{\Scal}{\mathcal S}
\newcommand{\Gcal}{\mathcal G}
\newcommand{\Ical}{\mathcal I}
\newcommand{\Ncal}{\mathcal N}
\newcommand{\Tcal}{\mathcal T}
\newcommand{\Ucal}{\mathcal U}
\newcommand{\Ghat}{\widehat G}
\newcommand{\Cband}{C_{\mathrm{band}}}
\newcommand{\iid}{\mathrm{iid}}
\newcommand{\SStack}{\mathrm{SS}}
\newcommand{\tr}{\operatorname{tr}}
\newcommand{\gr}{\operatorname{gr}}
\newcommand{\supp}{\operatorname{supp}}
\newcommand{\1}{\mathbf 1}
\newcommand{\sfA}{\mathsf A}
\newcommand{\sfR}{\mathsf R}
\newcommand{\sfC}{\mathsf C}
\newcommand{\sfX}{\mathsf X}
\newcommand{\sfH}{\mathsf H}
\newcommand{\sfW}{\mathsf W}
\newcommand{\sfK}{\mathsf K}
\newcommand{\sfI}{\mathsf I}
\title{SparseStack Is an Optimal Oblivious Subspace Embedding}
\author{Diar Heidary\\[0.3em]
  \normalsize\href{mailto:diyar.heydari83@sharif.edu}{\texttt{diyar.heydari83@sharif.edu}}%
  \thanks{Unaffiliated work.  This manuscript is the result of independent
  research and was not conducted as part of a research program at Sharif
  University of Technology, a master's thesis, or a doctoral dissertation.}}
\date{Version~2.0 --- September 2026}

\begin{document}
\maketitle

\begingroup
\setlength{\fboxrule}{1.2pt}
\setlength{\fboxsep}{8pt}
\noindent
\fcolorbox{red!75!black}{red!4}{%
  \begin{minipage}{\dimexpr\linewidth-2\fboxsep-2\fboxrule\relax}
  \small\raggedright
  \textbf{\color{red!75!black}AI usage disclosure.}
  The mathematical arguments presented here were derived by large language
  models under the author's direction, specifically ChatGPT~5.6~Sol and
  Claude~Fable~5.  This document was produced and revised in collaboration
  with ChatGPT~5.6~Sol and Claude~Fable~5.1.  The main theorem is formally verified in Lean~4.
  A comparison
  with this version is recorded in Appendix~\ref{app:lean}, and the author
  believes that the formal statements accurately represent the claims made
  here.  The author takes full responsibility for the mathematical claims,
  the manuscript--formalization correspondence, all citations and
  attributions, and any remaining errors.  For the source, Lean artifact,
  version history, and further information, visit
  \href{https://github.com/DiarHaidary/nelson-nguyen-sparse-fock}
  {\nolinkurl{github.com/DiarHaidary/nelson-nguyen-sparse-fock}}.
  The formal artifact is registered in the Palomar registry:
  \href{https://palomar-registry.org/entry?id=PALOMAR-2026-08-23-000004&version=1}
  {\nolinkurl{palomar-registry.org/entry?id=PALOMAR-2026-08-23-000004\&version=1}}.
  \end{minipage}%
}
\endgroup
\medskip

\begin{abstract}
We prove that fully independent SparseStack achieves the oblivious subspace
embedding parameters conjectured by Nelson and Nguyen (FOCS 2013):
$m=O((d+\log(1/\delta))/\varepsilon^2)$ rows and
$s=O(\log(d/\delta)/\varepsilon)$ nonzero entries per column for distortion
$\varepsilon$ and failure probability $\delta$ on any fixed
$d$-dimensional subspace, with explicit constants.  The proof turns
random-matrix concentration into a problem in finite-dimensional linear
algebra.  A coupling first
reduces the moment estimates to a model with independent finite-valued
entries.  We represent these variables by multiplication operators, so
their matrix moments become exact matrix elements of a deterministic
operator on a finite tensor product.  The central estimate bounds the
contribution of $\ell\ge1$ occupied sites sharing the external factor
$\R^d$ by $d+\ell-1$ rather than $d\ell$, yielding additive dependence on
the dimension and the moment order.
This approach controls both spectral edges without Gaussian comparison.
The main theorem has been formally verified in Lean~4.
\end{abstract}

\clearpage
\begingroup
\small
\tableofcontents
\endgroup
\clearpage

\section{Introduction}
\label{sec:intro}

\subsection{Sparse subspace embeddings}

Let $U\in\R^{n\times d}$ have orthonormal columns, and write its rows as
$u_i^\T$ with $u_i\in\R^d$, so that
\begin{equation}
  \sum_{i=1}^n u_iu_i^\T=I_d .
  \label{eq:parseval-frame}
\end{equation}
A random matrix $\Pi\in\R^{m\times n}$ is an \emph{oblivious subspace
embedding} for $d$-dimensional subspaces with distortion $\varepsilon$ and
failure probability $\delta$ if, for every such $U$,
\begin{equation}
  \Prob\bigl\{\|U^\T\Pi^\T\Pi U-I_d\|>\varepsilon\bigr\}\le\delta .
  \label{eq:ose-def}
\end{equation}
Since $x^\T(U^\T\Pi^\T\Pi U-I_d)x=\|\Pi Ux\|^2-\|x\|^2$, the complement of
the event in \eqref{eq:ose-def} is exactly the event that
$(1-\varepsilon)\|x\|^2\le\|\Pi Ux\|^2\le(1+\varepsilon)\|x\|^2$ for all
$x\in\R^d$; a bound on the Gram error controls both edges of the spectrum of
$\Pi U$.

This paper concerns the following sparse construction.  Fix integers
$s,b\ge1$ and let $H_1,\ldots,H_s\in\R^{b\times n}$ be independent
CountSketch matrices~\cite{CharikarChenFarachColton2002}.  That is, for each block
$g\in[s]$ and each column $i\in[n]$, draw a row $h_g(i)$ uniformly from
$[b]$ and a Rademacher sign $\sigma_{gi}$, all $2sn$ variables being
mutually independent, and set
\begin{equation}
  (H_g)_{ai}=\sigma_{gi}\,\1_{\{h_g(i)=a\}}\qquad(a\in[b],\ i\in[n]).
  \label{eq:countsketch-entry}
\end{equation}
Each column of each block has exactly one nonzero entry.  The
\emph{fully independent SparseStack} matrix is
\begin{equation}
  \Pi=s^{-1/2}
  \begin{bmatrix}H_1\\ \vdots\\ H_s\end{bmatrix}\in\R^{m\times n},
  \qquad m=sb ,
  \label{eq:sparsestack}
\end{equation}
so every column of $\Pi$ has exactly $s$ nonzero entries, each equal to
$\pm s^{-1/2}$.  This is the block construction of Kane and
Nelson~\cite{KaneNelson2014} and one of the two OSNAP variants of Nelson and
Nguyen~\cite{NelsonNguyen2013}; the name SparseStack follows Cama\~no,
Epperly, Meyer, and Tropp~\cite{CamanoEpperlyMeyerTropp2025} and the problem
list~\cite{AmselEtAl2026}.  ``Fully independent'' refers to the mutual
independence of all pairs $(h_g(i),\sigma_{gi})$; implementations with
limited independence are not covered by our result.

The parameters $m=O((d+\log(1/\delta))/\varepsilon^2)$ and
$s=O(\log(d/\delta)/\varepsilon)$ are the target proposed
in~\cite{NelsonNguyen2013}; the row count cannot be improved in general by
the lower bound of~\cite{NelsonNguyen2014}, and the question for the
SparseStack distribution is recorded as Problem~5.4
of~\cite{AmselEtAl2026}.

\subsection{Main result}

Throughout, set
\begin{equation}
  c_\star=(1-e^{-1})^{-1}<1.582,\qquad
  \Cband=3+\sqrt2,\qquad
  \Lambda=90c_\star^2<226 .
  \label{eq:global-constants}
\end{equation}

\begin{theorem}[Fully independent SparseStack]
\label{thm:main}
Let $n,d\ge1$, let $U\in\R^{n\times d}$ satisfy $U^\T U=I_d$, and let
$0<\varepsilon\le1$ and $0<\delta\le1$.  Define
\begin{equation}
  q=\max\Bigl\{1,\Bigl\lceil\log_2\frac d\delta\Bigr\rceil\Bigr\},
  \qquad D_q=d+2q+1,
  \label{eq:qD}
\end{equation}
and choose
\begin{equation}
  s=\Bigl\lceil\frac{\Lambda(2q+1)}{\varepsilon}\Bigr\rceil,\qquad
  M_0=\frac{\Lambda^2D_q}{\varepsilon^2},\qquad
  b=\Bigl\lceil\frac{M_0}{s}\Bigr\rceil,\qquad
  m=sb .
  \label{eq:integer-parameters}
\end{equation}
Then the SparseStack matrix \eqref{eq:sparsestack} satisfies
\begin{equation}
  \Prob\bigl\{\|U^\T\Pi^\T\Pi U-I_d\|>\varepsilon\bigr\}\le\delta ,
  \label{eq:main-prob}
\end{equation}
that is, with probability at least $1-\delta$,
\begin{equation}
  (1-\varepsilon)\|x\|^2\le\|\Pi Ux\|^2\le(1+\varepsilon)\|x\|^2
  \qquad\text{for every }x\in\R^d .
  \label{eq:ose}
\end{equation}
The parameters satisfy
\begin{equation}
  s<\frac{1356\,q}{\varepsilon},\qquad
  m<\frac{306456\,(d+q)}{\varepsilon^2}.
  \label{eq:explicit-size}
\end{equation}
\end{theorem}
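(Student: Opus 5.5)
The plan is the moment method applied to the centred Gram error $Z=U^\T\Pi^\T\Pi U-I_d$. Markov's inequality on a trace moment gives
\[
\Prob\{\|Z\|>\varepsilon\}\le\varepsilon^{-2q}\,\E\tr Z^{2q}.
\]
So it suffices to show $\E\tr Z^{2q}\le d\,(\varepsilon/2)^{2q}$. With $q\ge\log_2(d/\delta)$ this gives $d\,2^{-2q}\le\delta$, and in fact there is room to spare.

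\emph{Step 1: decomposition.} Write $Z=s^{-1}\sum_{g}\sum_{i\ne j}\sigma_{gi}\sigma_{gj}\1_{\{h_g(i)=h_g(j)\}}u_iu_j^\T$. The diagonal terms cancel exactly, because each block contributes $\sum_i u_iu_i^\T=I_d$ by \eqref{eq:parseval-frame} and there are $s$ blocks scaled by $s^{-1}$. Thus $Z$ is a sum over blocks and rows $a\in[b]$ of decoupled chaos-type terms. Expanding $\tr Z^{2q}$ gives a sum over closed walks $(i_1,j_1),\dots,(i_{2q},j_{2q})$ with block labels. Sign symmetry kills every walk in which some $(g,i)$ appears an odd number of times.

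\emph{Step 2: reduction to finite-valued independent entries.} Following the coupling announced in the abstract, I would replace the hashing-plus-sign variables by independent finite-valued variables whose mixed moments dominate. Each $(h_g(i),\sigma_{gi})$ is then realized as a multiplication operator on $\R^{b}\otimes\R^2$. The moment becomes the exact matrix element $\langle\Omega,\,T^{2q}\,\Omega\rangle$ of a deterministic operator $T$ acting on $\R^d\otimes\bigotimes_{g,i}(\text{site space})$. The mutual independence of the $(h_g(i),\sigma_{gi})$ is what makes the tensor product factorize.

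\emph{Step 3: the central combinatorial-algebraic estimate.} Group the walk contributions by the number $\ell$ of distinct occupied sites, meaning distinct (block, row) collision events. A naive bound charges a factor $d$ for each site, giving $d^\ell$; equivalently it multiplies dimensions, giving $d\ell$ inside the norm. The key estimate to aim for is that the contribution of $\ell\ge1$ sites sharing the external $\R^d$ factor is bounded by $(d+\ell-1)$ times the appropriate powers of $1/b$ and $1/s$. I would prove this by an interacting-Fock-space argument: the $\ell$ occupied sites act as a symmetrized (bosonic) excitation of $\R^d$. The relevant dimension count is then that of $\R^d\otimes\mathrm{Sym}$, and the operator norm of the annihilation/creation pair grows like $d+\ell-1$ rather than $d\ell$. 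I expect this to be the main obstacle, since it must hold uniformly over the walk shapes while keeping explicit constants.

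\emph{Step 4: summing the series.} Given Step 3, summing over $\ell\le 2q$ and over walk shapes, using $\|u_i\|^2\le1$ and $\sum_i\|u_i\|^2=d$, should give a bound of roughly
\[
\E\tr Z^{2q}\le d\Bigl(\Cband\,c_\star\Bigl(\sqrt{\tfrac{D_q}{m}}+\tfrac{2q+1}{s}\Bigr)\Bigr)^{2q}.
\]
The first term is the dense Gaussian-like term, with $D_q=d+2q+1$ coming from $d+\ell-1$ with $\ell\le 2q+2$. The second term is the sparsity term. The constant $c_\star=(1-e^{-1})^{-1}$ arises from summing the geometric-type series in the number of blocks hit. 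Plugging in $s\ge\Lambda(2q+1)/\varepsilon$ and $m\ge\Lambda^2D_q/\varepsilon^2$, with $\Lambda=90c_\star^2$, makes the bracket at most $\varepsilon/2$.

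Finally, \eqref{eq:explicit-size} is arithmetic from the ceilings. For $s$, use $2q+1\le3q$ and $\Lambda<226$. For $m$, use $m=sb<M_0+s$ and $D_q\le 3(d+q)$.
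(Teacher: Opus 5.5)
There is a genuine gap. The outline names the ingredients advertised in the abstract, but each load-bearing step is either only asserted or attached to a mechanism that would not work.

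\emph{Reduction (Step~2).} Independent variables ``whose mixed moments dominate'' transfer nothing. $\E\tr Z^{2q}$ is a sum of nonnegative walk moments times coefficients $\tr(U_{i_1j_1}\cdots U_{i_{2q}j_{2q}})$ of arbitrary sign. Termwise domination therefore yields no inequality, and passing to absolute values discards the cancellations through the projection $UU^\T$ that all later estimates use. The paper instead builds $\zeta\in\R^b$ with independent $\{0,\pm1\}$ entries, and from it a selector $\xi$ with exactly the signed one-hot law and $\E[\zeta\mid\xi]=c_b^{-1}\xi$. Since $Z$ is hollow, $Z=\E[c_b^2E_\iid\mid\Gcal_\xi]$, where $E_\iid$ is the hollow Gram error of the independent model. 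Conditional Jensen for the convex map $A\mapsto\tr A^{2q}$ then gives $\E\tr Z^{2q}\le c_b^{4q}\E\tr E_\iid^{2q}$. This is where $c_\star$ enters: $c_b=(1-(1-1/b)^b)^{-1}\le c_\star$, squared per factor of $Z$. It does not come from a geometric series over blocks. The model that is then graded has one $\R^3$ per entry $(r,i)$ of the sketch, with multiplication by the three-point entry given by the Jacobi matrix. Your $\R^b\otimes\R^2$ per pair $(g,i)$ is not what the coupling produces, and you give no grading or band structure for it.

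\emph{Core estimate (Steps~3--4).} The $d+\ell-1$ bound is stated as a goal, and the ``$\R^d\otimes\mathrm{Sym}$ dimension count'' is not an argument. The paper writes the light Gram operator $\Ghat$ on configurations with $\ell$ light sites as a compression of $VV^\dagger$ with $V=[V_1\ \cdots\ V_\ell]$. Here $V_k^\dagger V_k=d\,I$ and $\|V_k^\dagger V_l\|\le1$ because the cross term passes through $UU^\T$, so $\|V^\dagger V\|\le d+\ell-1$.

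More seriously, that bound controls a positive Gram operator, not $\langle\Omega,\Mcal^{2q}\Omega\rangle$. Summing per-shape bounds ``over walk shapes'' reintroduces precisely the loss the theorem must avoid. The pairing shapes alone number $(2q-1)!!$; charged separately, they put a factor of order $\sqrt q$ in front of $\sqrt{d/m}$ and give $m\gtrsim dq/\varepsilon^2$ instead of $(d+q)/\varepsilon^2$.

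The paper never counts shapes. It splits $\Mcal$ by grade into three bands and bounds each band in operator norm, uniformly on every sector $\Kcal_\nu$ with $\nu\le2q$:
\begin{itemize}
\item for $L_+=\sum_r\Scal_r(P^\dagger)\Scal_r(P)^\dagger$, block Cauchy--Schwarz with input budget $\Gcal(P)\preceq(m+\nu)I$ and output budget $\Ghat\preceq(d+\nu+1)I$;
\item for the heavy legs, counting occupied sites of a row, which gives $(\nu+1)/s$ via $(b-1)/m\le1/s$;
\item for the mixed terms, $2\sqrt{xy}\le x+y$.
\end{itemize}
It then sums over only $3^{2q}$ words, and every nonvanishing word stays in grades at most $2q$.

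Your displayed moment bound has no factor $3$, uses $c_\star$ instead of $c_\star^2$, and has no $D_q/m$ term. It is stronger than this route delivers and is not derived from anything in the proposal. The Markov step and the rounding arithmetic are fine.
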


In particular $m=O((d+\log(d/\delta))/\varepsilon^2)$ and
$s=O(\log(d/\delta)/\varepsilon)$.  Since $\log d\le d$ for $d\ge1$,
the row bound is equivalently
$m=O((d+\log(1/\delta))/\varepsilon^2)$, as stated in the abstract.
The numerical constants have not been optimized; see
Section~\ref{sec:scope}.

\begin{remark}[Random input subspace]
\label{rem:independent-U}
The theorem is proved for a fixed matrix $U$.  If $U$ is random,
independent of $\Pi$, and satisfies $U^\T U=I_d$ almost surely, then
conditioning on $U$ and applying the theorem gives \eqref{eq:main-prob}
unchanged, since the bound is uniform in $U$.  A matrix $U$ that depends on
$\Pi$ is not covered.
\end{remark}

\subsection{Related work}

\paragraph{Sparse subspace embeddings.}
CountSketch, with a single nonzero per column, is an oblivious subspace
embedding with $m=O(d^2/\varepsilon^2)$ rows~\cite{ClarksonWoodruff2013}.
Nelson and Nguyen~\cite{NelsonNguyen2013} introduced OSNAP, with $s$
nonzeros per column, proved that $m=O(d\log^8(d)/\varepsilon^2)$ and
$s=O(\log^3(d)/\varepsilon)$ suffice, and conjectured that
$m=O((d+\log(1/\delta))/\varepsilon^2)$ and $s=O(\log(d/\delta)/\varepsilon)$
suffice; the row count is optimal by their lower
bound~\cite{NelsonNguyen2014}.  Cohen~\cite{Cohen2016} obtained
$m=O(d\log(d)/\varepsilon^2)$ with $s=O(\log(d)/\varepsilon)$ through trace
inequalities.  Chenakkod, Derezi\'nski, Dong, and
Rudelson~\cite{ChenakkodEtAl2024} obtained the optimal dimension
$m=O(d/\varepsilon^2)$ with $s=O(\log^4(d)/\varepsilon^6)$; Chenakkod,
Derezi\'nski, and Dong reduced the sparsity to
$O(\log^2(d)/\varepsilon+\log^3(d))$~\cite{ChenakkodDerezinskiDong2025} and
then, for $\varepsilon\ge d^{-O(1)}$, to $\tilde O(\log(d)/\varepsilon)$ with
$m=\tilde O(d/\varepsilon^2)$, where $\tilde O$ hides sub-polylogarithmic
factors in $d$~\cite{ChenakkodDerezinskiDong2026}.  These three works rely
on the matrix universality and comparison results of Bandeira,
Boedihardjo, and van Handel~\cite{BandeiraBoedihardjoVanHandel2023} and of
Brailovskaya and van Handel~\cite{BrailovskayaVanHandel2024}.
Theorem~\ref{thm:main} gives $m=O((d+\log(d/\delta))/\varepsilon^2)$ and
$s=O(\log(d/\delta)/\varepsilon)$ for the fully independent SparseStack
distribution, with explicit constants and without Gaussian comparison; it
does not cover other OSNAP distributions or limited independence.

\paragraph{The lower edge and lower bounds on sparsity.}
Tropp~\cite[Theorem~3.6]{Tropp2026} uses Gaussian comparison to prove the
lower-edge guarantee for the complex SparseStack construction obtained by
stacking independent complex CountSketch blocks whose nonzero entries are
Steinhaus phases.  Writing $s$ for Tropp's block count and $m$ for his total
row count, for lower distortion $\alpha\in(0,1)$ and failure probability
$p\in(0,1]$, his result requires
\[
  s\ge 6\alpha^{-1}\log(d/p),
  \qquad
  m\ge 16\alpha^{-2}\bigl(d\vee\log(d/p)\bigr),
\]
and gives
$\Prob\{\sigma_{\min}^{2}(\Phi U)>1-\alpha\}\ge 1-p$ for every
$U\in\mathbb C^{n\times d}$ with orthonormal columns.  Thus it establishes
the lower edge (injectivity), but it does not address the upper edge or the
real Rademacher model used in this paper.
Huang, Rudelson, and Tikhomirov~\cite{HuangRudelsonTikhomirov2026} show
that in the proportional regime $m=O(d)$ no sketch with $O(1)$ nonzeros
per column can be an oblivious subspace injection at constant distortion,
under a mild structural assumption; the logarithmic sparsity in
Theorem~\ref{thm:main} therefore cannot in general be replaced by a
constant in that regime.

\paragraph{Methods.}
The operator model of Section~\ref{sec:fock-model} is the finite instance of
a standard construction: multiplication by a finitely supported random
variable is a Jacobi matrix in an orthogonal-polynomial basis, and the
decomposition of that matrix into raising and lowering parts is the
interacting Fock space of Accardi and Bo\.zejko~\cite{AccardiBozejko1998}.
The use of such decompositions to compute spectral quantities by counting
vacuum-to-vacuum paths is familiar from the quantum-decomposition method in
spectral graph theory~\cite{HashimotoHoraObata2003,HoraObata2007}.  Norm
bounds for operator-valued sums in terms of their Gram operators, as in
Section~\ref{sec:band-bounds}, are in the spirit of the noncommutative
Khintchine inequalities~\cite{HaagerupPisier1993,Buchholz2001}.
Bandeira, Cipolloni, Schr\"oder, and van
Handel~\cite[Section~4.1]{BandeiraCipolloniSchroderVanHandel2024} obtain
norm--moment comparisons for polynomials in free semicircular operators
through ultracontractivity estimates.  Our argument instead uses exact
finite classical multiplication operators.
Our $E_\iid$ is a degree-two matrix polynomial in independent three-point
variables.  An alternative general framework for polynomial random-matrix
concentration is provided by Rajendran and
Tulsiani~\cite{RajendranTulsiani2023}, using recursive matrix Efron--Stein
inequalities.  Their Rademacher bounds reduce to deterministic coefficient
matrices, and Section~8 treats sparse graph matrices with
sparsity-dependent bounds.  We have not checked whether their framework
recovers the SparseStack parameters proved here.

The labelled shared-factor bound underlying Section~\ref{sec:light-sector}
has an equivalent formulation as a maximally entangled-projector sum in
quantum cloning~\cite[Section~IV and Appendix~A]{KayRamanathanKaszlikowski2013}.
We give a self-contained proof and apply it through a hard-core compression
in the band estimates.
The reduction of Section~\ref{sec:convex-transfer} uses the
conditional-expectation and convex-order comparison principle underlying
Hoeffding's sampling comparison~\cite{Hoeffding1963} and its matrix
counterpart due to Gross and Nesme~\cite{GrossNesme2010}; the signed-selector
coupling here compares different laws from their with- and
without-replacement sampling models.

\subsection{Overview of the proof}
\label{sec:overview}

The proof bounds the even moments $\E\tr(E^{2q})$ of the Gram error
$E=U^\T\Pi^\T\Pi U-I_d$ and applies Markov's inequality.  It has four parts.

\emph{Reduction to independent entries} (Section~\ref{sec:convex-transfer}).
Because every column of $\Pi$ has unit norm, $E$ is a hollow sum over pairs
of distinct columns.  A coupling replaces each signed one-hot column
selector $\xi\in\R^b$ of a CountSketch block by a vector $\zeta\in\R^b$ with
independent entries in $\{0,\pm1\}$, in such a way that
$\xi=\E[c_b\zeta\mid\xi]$ for an explicit constant $c_b\le c_\star$.
Conditional Jensen then gives
$\E\tr(E^{2q})\le c_b^{4q}\,\E\tr(E_\iid^{2q})$, where $E_\iid$ is the hollow
Gram error of the independent-entry model.

\emph{A finite operator model} (Section~\ref{sec:fock-model}).  The entries
of the independent model take three values, so multiplication by an entry
acts on a three-dimensional space as the Jacobi matrix \eqref{eq:Jacobi}.
Tensoring over the $mn$ entries, $\E\tr(E_\iid^k)$ becomes the vacuum
matrix element of the $k$-th power of a deterministic operator $\Mcal$ on
$\R^d\otimes(\R^3)^{\otimes mn}$ (Proposition~\ref{prop:vacuum-moment}).
The space is graded by the total level of the tensor factors, and $\Mcal$
splits into three bands $B^+,B^0,B^-$ that raise the grade by $2$, preserve
it, or lower it by $2$.

\emph{Band estimates}
(Sections~\ref{sec:light-sector}--\ref{sec:band-bounds}).  The main
technical result is the uniform bound
\begin{equation}
  \|B^\Delta|_{\Kcal_\nu}\|
  \le(3+\sqrt2)\Bigl[\sqrt{\frac{d+\nu+1}{m}}+\frac{d+\nu+1}{m}+\frac{\nu+1}{s}\Bigr]
  \label{eq:envelope-intro}
\end{equation}
on the grade-$\nu$ sector $\Kcal_\nu$ (Proposition~\ref{prop:band-envelope}).
Each band is a sum of terms that act at two sites of one row of the sketch,
and each such term factors through row operators built from the frame
vectors $u_i$ of that row; its norm is then bounded by the geometric mean of
two Gram operators (Lemma~\ref{lem:block-CS}).  The square-root term in
\eqref{eq:envelope-intro} comes from the creation of two light sites, and
the essential point is that its bound involves $d+\nu$ rather than a product
of $d$ and $\nu$.  This is the content of Lemma~\ref{lem:LSL}: on
configurations with $\ell$ light sites the relevant Gram operator has norm
at most $d+\ell-1$.  The mechanism is that all $\ell$ sites share one
external factor $\R^d$.  Each site contributes the rank-one operator
$|\omega\rangle\langle\omega|$ for a vector $\omega$ of squared norm $d$,
but two such operators attached to different sites overlap only through the
orthogonal projection $UU^\T$, so the Gram matrix of the $\ell$ insertions
has diagonal entries $d$ and off-diagonal blocks of norm at most one.  The term $(\nu+1)/s$ comes from the heavy legs
of the Jacobi matrix, whose coefficient $\sqrt{b-1}$ is large but which act
only at occupied sites; their norms are bounded by counting occupied sites
in a row (Section~\ref{sec:hop}), and the normalization $(b-1)/m\le1/s$
turns the count into the sparsity scale.  Mixed terms are handled by the
same two ingredients and the inequality $2\sqrt{xy}\le x+y$.

\emph{Moment expansion} (Section~\ref{sec:proof}).  Expanding
$\langle\Omega,\Mcal^{2q}\Omega\rangle$ into $3^{2q}$ words in the three
bands, each nonvanishing vacuum-to-vacuum word stays below grade $2q$, which
gives
$\E\tr(E_\iid^{2q})\le d(3\beta_q)^{2q}$ with $\beta_q$ the right side of
\eqref{eq:envelope-intro} at $\nu=2q$.  With $m\gtrsim(d+q)/\varepsilon^2$
and $s\gtrsim q/\varepsilon$ the base $3c_b^2\beta_q/\varepsilon$ is below
$1/2$, and $q=\max\{1,\lceil\log_2(d/\delta)\rceil\}$ makes the failure
probability at most $\delta$.

The theorem has also been formalized in Lean~4; see
Appendix~\ref{app:lean}.

\subsection{Notation and conventions}
\label{sec:notation}

We write $[k]=\{1,\ldots,k\}$.  All sums over $i\ne j$ are over ordered
pairs.  The norm $\|\cdot\|$ is the Euclidean norm of a vector or the
operator norm of a linear map; $\langle\cdot,\cdot\rangle$ is the Euclidean
inner product, extended to tensor products in the usual way.  All spaces are
real and finite-dimensional, tensor products may be read as Kronecker
products, $\dagger$ denotes the adjoint (transpose), and $\preceq$ is the
Loewner order.  For vectors $v,w$ we write $|v\rangle\langle w|$ for the
rank-one map $x\mapsto\langle w,x\rangle v$.  For an operator $\Tcal$ and a
subspace $V$ of its domain, $\Tcal|_V$ is the restriction to $V$ and
$\|\Tcal|_V\|$ its norm.  All probabilities and expectations refer to the
sketching variables.

Throughout, $u_1,\ldots,u_n$ are the rows of a fixed $U$ with $U^\T U=I_d$,
and $U_{ij}=u_iu_j^\T$.  Thus $\|u_i\|\le1$, and for every $S\subset[n]$,
every $x\in\R^d$, and all scalars $(a_i)_{i\in S}$,
\begin{equation}
  \Bigl\|\sum_{i\in S}u_iu_i^\T\Bigr\|\le1,\qquad
  \Bigl\|\sum_{i\in S}a_iu_i\Bigr\|^2\le\sum_{i\in S}a_i^2,\qquad
  \sum_{i\in S}(u_i^\T x)^2\le\|x\|^2 .
  \label{eq:subset-parseval}
\end{equation}
The last two inequalities are $\|U_S^\T\|\le1$ and $\|U_S\|\le1$ for the
row submatrix $U_S$; we call them the synthesis and analysis inequalities.
Every application below involves distinct columns $i$ of a single row $r$
of the sketch.

\section{Reduction to independent entries}
\label{sec:convex-transfer}

\subsection{The hollow Gram error}

For $g\in[s]$ and $i\in[n]$, let $\xi_{gi}=\sigma_{gi}e_{h_g(i)}\in\R^b$ be
the $i$-th column of $H_g$.  Since $\|\xi_{gi}\|=1$ and
\eqref{eq:parseval-frame} holds,
\[
  U^\T\Pi^\T\Pi U
  =\frac1s\sum_{g=1}^s\sum_{i,j=1}^n\langle\xi_{gi},\xi_{gj}\rangle u_iu_j^\T
  =I_d+\frac1s\sum_{g=1}^s\sum_{i\ne j}\langle\xi_{gi},\xi_{gj}\rangle u_iu_j^\T .
\]
Hence the Gram error is the hollow sum
\begin{equation}
  E_\SStack:=U^\T\Pi^\T\Pi U-I_d
  =\frac1s\sum_{g=1}^s\sum_{i\ne j}\langle\xi_{gi},\xi_{gj}\rangle u_iu_j^\T .
  \label{eq:ss-hollow}
\end{equation}
The absence of diagonal terms is what makes the coupling below applicable.

\subsection{A coupling with an independent-entry vector}

Let $\zeta\in\R^b$ have independent coordinates with
\begin{equation}
  \Prob\{\zeta_a=1\}=\Prob\{\zeta_a=-1\}=\frac1{2b},\qquad
  \Prob\{\zeta_a=0\}=1-\frac1b\qquad(a\in[b]),
  \label{eq:zeta-law}
\end{equation}
and put
\[
  p_0=\Prob\{\zeta=0\}=\Bigl(1-\frac1b\Bigr)^b,\qquad c_b=(1-p_0)^{-1},
\]
so that $c_1=1$.  Given $\zeta$, define a signed basis vector $\xi$ as
follows.  If $\zeta\ne0$, choose an index $\hat a$ uniformly from
$\supp\zeta=\{a:\zeta_a\ne0\}$, independently of everything else, and set
$\xi=\zeta_{\hat a}e_{\hat a}$.  If $\zeta=0$, choose $\hat a$ uniformly
from $[b]$ and an independent Rademacher sign $\sigma$, and set
$\xi=\sigma e_{\hat a}$.

\begin{lemma}[Selector coupling]
\label{lem:coupling}
\begin{enumerate}[label=\textup{(\roman*)}]
\item $\xi$ is uniformly distributed on the $2b$ signed basis vectors of
  $\R^b$ and is independent of $|\supp\zeta|$.
\item $\E[\zeta\mid\xi]=c_b^{-1}\xi$; equivalently, $\xi=\E[c_b\zeta\mid\xi]$.
\item $1\le c_b\le c_\star$.
\end{enumerate}
\end{lemma}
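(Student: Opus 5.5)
The plan is to exploit the symmetry of the law of $\zeta$ under the hyperoctahedral group, that is, coordinate permutations combined with sign flips. Conditioning on $K=|\supp\zeta|$ lets all three parts be computed directly.

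\emph{Part (i).} On $\{K=k\}$ with $k\ge1$, the support is a uniform $k$-subset of $[b]$ and the signs on it are independent Rademachers. Both facts follow from the product form of \eqref{eq:zeta-law}: every $\zeta$ with a given support size has the same probability. Choosing $\hat a$ uniformly from the support therefore makes $\hat a$ uniform on $[b]$, by exchangeability of coordinates. The sign $\zeta_{\hat a}$ is a fair sign independent of $\hat a$. Hence, conditionally on $K=k$, the vector $\xi$ is uniform on $\{\pm e_a\}$. On $\{K=0\}$ this holds by construction. Since the conditional law of $\xi$ does not depend on $k$, $\xi$ is uniform and independent of $K$.

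\emph{Part (ii).} By the same symmetry it suffices to compute $\E[\zeta\mid\xi=e_1]$. The law of $\zeta$ and the selection rule are invariant under signed permutations that fix $e_1$, namely sign flips and permutations of the coordinates $2,\dots,b$. So $\E[\zeta_a\mid\xi=e_1]=0$ for $a\ne1$, because flipping the sign of coordinate $a$ preserves the conditioning event. It remains to find $\E[\zeta_1\mid\xi=e_1]$. On $\{K\ge1\}$ the event $\xi=e_1$ forces $\zeta_1=1$. On $\{K=0\}$ we have $\zeta_1=0$. Thus
\[
\E[\zeta_1\mid\xi=e_1]=\Prob\{K\ge1\mid\xi=e_1\}=\Prob\{K\ge1\}=1-p_0=c_b^{-1},
\]
where the middle equality uses the independence from part (i). Transporting this by symmetry to every signed basis vector gives $\E[\zeta\mid\xi]=c_b^{-1}\xi$.

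\emph{Part (iii).} We have $1-p_0\in(0,1]$, so $c_b\ge1$. For the upper bound, $\log(1-1/b)\le-1/b$ gives $p_0=(1-1/b)^b\le e^{-1}$. Hence $1-p_0\ge1-e^{-1}$ and $c_b\le c_\star$.

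The main obstacle is the independence claim in part (i), which is exactly what makes the conditional mean in part (ii) clean. The care needed there is in verifying that, conditional on $K=k$, the support is uniform and the signs are fair and independent. This follows directly from the product measure, since each nonzero coordinate independently has a fair sign.
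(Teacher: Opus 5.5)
Your proposal is correct and follows essentially the same route as the paper. Part (i) conditions on $|\supp\zeta|$ to get a uniform conditional law, hence uniformity and independence. Part (ii) uses $\zeta_k=\sigma\1_{\{\zeta\ne0\}}$ on $\{\xi=\sigma e_k\}$ together with a sign-flip symmetry for the other coordinates. Part (iii) uses $(1-1/b)^b\le e^{-1}$. The only difference is cosmetic: you reduce to $\xi=e_1$ by signed-permutation equivariance, whereas the paper handles a general $\sigma e_k$ directly.
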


\begin{proof}
(i) Fix $k\in[b]$ and $\sigma\in\{-1,1\}$.  Conditionally on
$|\supp\zeta|=k_0\ge1$, the support is a uniformly random $k_0$-subset of
$[b]$ and the signs on the support are independent Rademacher variables, so
\[
  \Prob\{\xi=\sigma e_k\mid|\supp\zeta|=k_0\}
  =\frac{k_0}{b}\cdot\frac1{k_0}\cdot\frac12=\frac1{2b},
\]
the three factors being the probabilities that $k\in\supp\zeta$, that
$\hat a=k$ given $k\in\supp\zeta$, and that $\zeta_k=\sigma$.  The fallback
rule gives the same value when $k_0=0$.  Thus the conditional law of $\xi$
given $|\supp\zeta|$ is uniform on the $2b$ signed basis vectors, which is
(i).

(ii) Fix $\xi=\sigma e_k$.  On $\{\zeta\ne0\}$ the construction selected
the coordinate $k$, so $\zeta_k=\sigma$; on $\{\zeta=0\}$, $\zeta_k=0$.  By
(i), $\Prob\{\zeta\ne0\mid\xi=\sigma e_k\}=1-p_0$, and therefore
$\E[\zeta_k\mid\xi=\sigma e_k]=(1-p_0)\sigma$.  For $a\ne k$, flipping the
sign of $\zeta_a$ preserves the joint law of $\zeta$ and the auxiliary
randomness, preserves $\supp\zeta$ and the selected index, and hence
preserves $\xi$; so the conditional law of $\zeta_a$ given $\xi=\sigma e_k$
is symmetric and $\E[\zeta_a\mid\xi=\sigma e_k]=0$.  Together,
$\E[\zeta\mid\xi]=(1-p_0)\xi=c_b^{-1}\xi$.

(iii) $0\le p_0\le e^{-1}$, because $(1-1/b)^b\le e^{-1}$.
\end{proof}

\subsection{Convex-order transfer}

Take independent copies $(\xi_{gi},\zeta_{gi})$ of the coupled pair, one
for each $(g,i)\in[s]\times[n]$.  By Lemma~\ref{lem:coupling}(i) the
vectors $\xi_{gi}$ have the joint law of the columns of the blocks $H_g$,
so we may realize $E_\SStack$ through them.  Define the independent-entry
counterpart
\begin{equation}
  E_\iid=\frac1s\sum_{g=1}^s\sum_{i\ne j}
    \langle\zeta_{gi},\zeta_{gj}\rangle u_iu_j^\T ,
  \label{eq:iid-vector}
\end{equation}
and let $\Gcal_\xi$ be the $\sigma$-algebra generated by all $\xi_{gi}$.
Conditionally on $\Gcal_\xi$, the vectors $\zeta_{gi}$ are independent, and
the conditional law of $\zeta_{gi}$ depends only on $\xi_{gi}$.  Hence, for
$i\ne j$, Lemma~\ref{lem:coupling}(ii) gives
\[
  \E[\langle\zeta_{gi},\zeta_{gj}\rangle\mid\Gcal_\xi]
  =\bigl\langle\E[\zeta_{gi}\mid\xi_{gi}],\E[\zeta_{gj}\mid\xi_{gj}]\bigr\rangle
  =c_b^{-2}\langle\xi_{gi},\xi_{gj}\rangle ,
\]
and summing against the deterministic matrices $u_iu_j^\T$,
\begin{equation}
  E_\SStack=\E[c_b^2E_\iid\mid\Gcal_\xi].
  \label{eq:conditional-transfer}
\end{equation}
The restriction to $i\ne j$ is essential: by Lemma~\ref{lem:coupling}(i),
$\E[\|\zeta_{gi}\|^2\mid\xi_{gi}]=\E|\supp\zeta_{gi}|=1$, whereas
$\|\xi_{gi}\|^2=1$, so a diagonal term would pick up the factor
$c_b^2\ne1$ when $b>1$.

\begin{proposition}[Moment transfer]
\label{prop:transfer}
For every integer $q\ge1$,
\begin{equation}
  \E\tr(E_\SStack^{2q})\le c_b^{4q}\,\E\tr(E_\iid^{2q}).
  \label{eq:convex-moment}
\end{equation}
\end{proposition}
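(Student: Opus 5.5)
The plan is to apply conditional Jensen's inequality to the convex function $X\mapsto\tr(X^{2q})$ on symmetric $d\times d$ matrices, using the representation \eqref{eq:conditional-transfer}.

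The first step is to check that $E_\SStack$ and $E_\iid$ are symmetric. Since the sum over ordered pairs $i\ne j$ includes both $(i,j)$ and $(j,i)$, and $\langle\zeta_{gi},\zeta_{gj}\rangle$ is symmetric in $i,j$, we get $E_\iid^\T=E_\iid$, and likewise for $E_\SStack$.

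Next comes convexity. The map $\phi(X)=\tr(X^{2q})=\sum_k\lambda_k(X)^{2q}$ is convex on symmetric matrices. This is the standard fact that $\tr f(X)$ is convex whenever $f$ is convex on $\R$. Alternatively, $\phi(X)^{1/(2q)}=\|X\|_{S_{2q}}$ is a norm, and composing a norm with the convex increasing function $t\mapsto t^{2q}$ on $[0,\infty)$ gives a convex function. The norm description is the cleanest route, and I would use it.

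Then apply conditional Jensen. All variables are finite-valued, so conditional expectations given $\Gcal_\xi$ are finite averages over the conditional law. Jensen's inequality for the convex $\phi$ then holds pointwise on each atom of $\Gcal_\xi$:
\[
\phi(E_\SStack)=\phi\bigl(\E[c_b^2E_\iid\mid\Gcal_\xi]\bigr)\le\E[\phi(c_b^2E_\iid)\mid\Gcal_\xi]=c_b^{4q}\,\E[\tr(E_\iid^{2q})\mid\Gcal_\xi].
\]
Taking expectations gives \eqref{eq:convex-moment}.

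The only point needing care is the justification of matrix conditional Jensen. Because the probability space is finite, it reduces to the finite-convex-combination form of convexity, so I do not expect a genuine obstacle. The substantive input is \eqref{eq:conditional-transfer}, which is already established.
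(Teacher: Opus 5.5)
Your proposal is correct and follows the paper's proof essentially step for step. Both check that $E_\SStack$ and $E_\iid$ are symmetric, use convexity of $\tr(A^{2q})$ as the $2q$-th power of the Schatten $2q$-norm, apply conditional Jensen to \eqref{eq:conditional-transfer}, and then take expectations. Your remark that conditional Jensen reduces to finite convex combinations on the atoms of $\Gcal_\xi$ is a valid justification that the paper leaves implicit.
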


\begin{proof}
Both $E_\SStack$ and $E_\iid$ are real symmetric, since the $(j,i)$ term of
each ordered sum is the transpose of the $(i,j)$ term.  On real symmetric
matrices the function $f(A)=\tr(A^{2q})=\|A\|_{S_{2q}}^{2q}$ is convex,
being the $2q$-th power of the Schatten norm.  Conditional Jensen and
\eqref{eq:conditional-transfer} give
$f(E_\SStack)=f(\E[c_b^2E_\iid\mid\Gcal_\xi])
\le\E[f(c_b^2E_\iid)\mid\Gcal_\xi]$, and taking expectations yields
$\E f(E_\SStack)\le\E f(c_b^2E_\iid)=c_b^{4q}\,\E f(E_\iid)$.
\end{proof}

Comparisons of this kind, in which a dependent model is replaced by an
independent one through a conditional barycenter and Jensen's inequality,
go back to Hoeffding's treatment of sampling without
replacement~\cite{Hoeffding1963}; the matrix version is due to Gross and
Nesme~\cite{GrossNesme2010} and is used, for example, in~\cite{Tropp2011}.
See~\cite{ShakedShanthikumar2007} for the convex order in general.  The
specific ingredient here is the signed-selector coupling of
Lemma~\ref{lem:coupling}, whose target law admits the representation of
the next section; the sampling laws differ from those in the cited
with- and without-replacement comparisons.

\section{A finite tensor-product model for the moments}
\label{sec:fock-model}

This section expresses $\E\tr(E_\iid^k)$ as a matrix element of a
deterministic operator on a finite tensor product.  The construction is
elementary linear algebra on a finite probability space; the words
``site,'' ``level,'' and ``vacuum'' name index sets and basis vectors and
carry no further meaning.

\subsection{Independent three-point entries}

Identify the pair $(g,a)\in[s]\times[b]$ with the row $r=(g-1)b+a\in[m]$ of
$\Pi$, and set
\[
  \eta_{ri}=\sqrt b\,\zeta_{gi}(a)\qquad((r,i)\in[m]\times[n]).
\]
The $mn$ variables $\eta_{ri}$ are independent, each with the law
\begin{equation}
  \Prob\{\eta=\sqrt b\}=\Prob\{\eta=-\sqrt b\}=\frac1{2b},\qquad
  \Prob\{\eta=0\}=1-\frac1b ,
  \label{eq:eta-law}
\end{equation}
so that $\E\eta=0$, $\E\eta^2=1$, and $\E\eta^4=b$.  Expanding the inner
products in \eqref{eq:iid-vector} and using $m=sb$,
\begin{equation}
  E_\iid=\frac1m\sum_{r=1}^m\sum_{i\ne j}\eta_{ri}\eta_{rj}\,u_iu_j^\T .
  \label{eq:iid-row}
\end{equation}
Under the parameter choice \eqref{eq:integer-parameters} one has $b\ge2$
(see \eqref{eq:b-at-least-two} below), so $\eta$ takes each of its three
values with positive probability.  We assume $b\ge2$ from now on.

\subsection{Multiplication by one entry}

Let $L_2(\eta)$ be the space of real functions on the three-point support
of \eqref{eq:eta-law}, with inner product
$\langle f,g\rangle=\E[f(\eta)g(\eta)]$.  It is three-dimensional, and the
functions
\begin{equation}
  e_0=1,\qquad e_1=\eta,\qquad e_2=\frac{\eta^2-1}{\sqrt{b-1}}
  \label{eq:one-site-basis}
\end{equation}
form an orthonormal basis: the cross inner products vanish by centering and
symmetry, $\E\eta^2=1$, and $\E(\eta^2-1)^2=\E\eta^4-2\E\eta^2+1=b-1$.
Since $\eta^3=b\eta$ on the support, multiplication by $\eta$ acts by
\[
  \eta e_0=e_1,\qquad
  \eta e_1=\eta^2=e_0+\sqrt{b-1}\,e_2,\qquad
  \eta e_2=\frac{\eta^3-\eta}{\sqrt{b-1}}=\sqrt{b-1}\,e_1 ,
\]
so its matrix in the basis $(e_0,e_1,e_2)$ is the Jacobi matrix
\begin{equation}
  J=\begin{pmatrix}0&1&0\\1&0&\sqrt{b-1}\\0&\sqrt{b-1}&0\end{pmatrix}.
  \label{eq:Jacobi}
\end{equation}
Write $|a\rangle\langle c|$ for the matrix unit that sends $e_c$ to $e_a$
and annihilates the other two basis vectors, and define the four
\emph{legs}
\begin{equation}
  P^\dagger=|1\rangle\langle0|,\qquad P=|0\rangle\langle1|,\qquad
  R=|2\rangle\langle1|,\qquad R^\dagger=|1\rangle\langle2| ,
  \label{eq:local-legs}
\end{equation}
so that
\begin{equation}
  J=P+P^\dagger+\sqrt{b-1}\,(R+R^\dagger).
  \label{eq:J-decomp}
\end{equation}
The nonzero actions of the legs on the basis, and their coefficients in
\eqref{eq:J-decomp}, are as follows; all other actions on basis vectors are
zero.
\begin{center}
\renewcommand{\arraystretch}{1.15}
\begin{tabular}{cccc}
\toprule
Leg & Nonzero action & Change of index & Coefficient in $J$\\
\midrule
$P^\dagger$ & $e_0\mapsto e_1$ & $+1$ & $1$\\
$P$         & $e_1\mapsto e_0$ & $-1$ & $1$\\
$R$         & $e_1\mapsto e_2$ & $+1$ & $\sqrt{b-1}$\\
$R^\dagger$ & $e_2\mapsto e_1$ & $-1$ & $\sqrt{b-1}$\\
\bottomrule
\end{tabular}
\end{center}
Thus $P^\dagger$ and $R$ raise the basis index and $P$ and $R^\dagger$
lower it.  We call $P,P^\dagger$ the \emph{light} legs
and $R,R^\dagger$ the \emph{heavy} legs; the names refer to the coefficient
$\sqrt{b-1}$ in \eqref{eq:J-decomp}, which is the only place where the
sparsity enters the local model.  The representation of a finitely
supported random variable by a Jacobi matrix in an orthogonal-polynomial
basis, and the decomposition of that matrix into raising and lowering parts
(the diagonal part vanishes here because $\eta$ is symmetric), is the
finite case of the interacting Fock space construction of Accardi and
Bo\.zejko~\cite{AccardiBozejko1998}; the three-level case used here is
completely described by \eqref{eq:Jacobi}--\eqref{eq:J-decomp}.

\subsection{Sites, patterns, and the product space}

Let $\sfI=[m]\times[n]$; its elements $(r,i)$ are called \emph{sites}.  Let
\begin{equation}
  \Fcal=\bigotimes_{(r,i)\in\sfI}\R^3=(\R^3)^{\otimes mn},
  \label{eq:Fock-product}
\end{equation}
with the factors ordered lexicographically, so that the site $(r,i)$
occupies position $\operatorname{pos}(r,i)=(r-1)n+i$.  For a $3\times3$
matrix $Z$ and a site $(r,i)$, the \emph{lift} of $Z$ to the site $(r,i)$
is the $3^{mn}\times3^{mn}$ matrix
\begin{equation}
  Z_{ri}=I_3^{\otimes(\operatorname{pos}(r,i)-1)}\otimes Z\otimes
         I_3^{\otimes(mn-\operatorname{pos}(r,i))},
  \label{eq:site-lift}
\end{equation}
which applies $Z$ to the factor indexed by $(r,i)$ and the identity to all
other factors.  Lifts to distinct sites commute, and
$(Z_{ri})^\dagger=(Z^\dagger)_{ri}$.  We write
$J_{ri},P_{ri},P^\dagger_{ri},R_{ri},R^\dagger_{ri}$ for the lifts of the
corresponding local matrices.

A \emph{pattern} is a map $\tau:\sfI\to\{0,1,2\}$; the value $\tau_{ri}$ is
the \emph{level} of the site $(r,i)$.  The pattern $\tau$ indexes the basis
vector
\begin{equation}
  |\tau\rangle=\bigotimes_{(r,i)\in\sfI}e_{\tau_{ri}}\in\Fcal ,
  \label{eq:pattern-basis}
\end{equation}
and these $3^{mn}$ vectors form an orthonormal basis of $\Fcal$.  The
all-zero pattern gives the \emph{vacuum} $\Omega=e_0^{\otimes mn}$.  A site
at level $0$, $1$, or $2$ in $\tau$ is called \emph{free}, \emph{light}, or
\emph{heavy}, respectively, and we write
\begin{equation}
  S_j(\tau)=\{(r,i)\in\sfI:\tau_{ri}=j\},\qquad
  S_j^r(\tau)=\{i\in[n]:\tau_{ri}=j\},\qquad
  F_r(\tau)=S_0^r(\tau)
  \label{eq:pattern-sets}
\end{equation}
for the set of level-$j$ sites, the level-$j$ columns of row $r$, and the
free columns of row $r$.  For a fixed row $r$, the pattern
$\tau[a\mapsto j]$ agrees with $\tau$ except that the site $(r,a)$ is
assigned level $j$, and $\tau[a\mapsto j,\,c\mapsto k]$ is defined
analogously for two sites of the same row.  In this notation, the lift of a
leg $Z=|a'\rangle\langle a|$ acts on basis vectors by
\begin{equation}
  Z_{ri}|\tau\rangle=
  \begin{cases}
    |\tau[i\mapsto a']\rangle,&\tau_{ri}=a,\\
    0,&\text{otherwise},
  \end{cases}
  \label{eq:leg-on-pattern}
\end{equation}
by \eqref{eq:site-lift} and \eqref{eq:pattern-basis}.

The link with probability is the following.  By independence, the products
$\prod_{(r,i)}e_{\tau_{ri}}(\eta_{ri})$ form an orthonormal basis of the
$L_2$ space of the array $(\eta_{ri})$, so the map $\Ucal$ defined by
$\Ucal|\tau\rangle=\prod_{(r,i)}e_{\tau_{ri}}(\eta_{ri})$ is an isometry
from $\Fcal$ onto that space.  Under $\Ucal$, the vacuum corresponds to the
constant function $1$ and $J_{ri}$ to multiplication by $\eta_{ri}$.
Consequently, for every polynomial $p$ in the commuting operators $J_{ri}$,
\begin{equation}
  \bigl\langle\Omega,p\bigl((J_{ri})\bigr)\Omega\bigr\rangle
  =\E\,p\bigl((\eta_{ri})\bigr).
  \label{eq:scalar-vacuum-rule}
\end{equation}

\subsection{The moment identity}

Let $q_1,\ldots,q_d$ be the standard basis of $\R^d$, and define
\begin{equation}
  \Mcal=\frac1m\sum_{r=1}^m\sum_{i\ne j}U_{ij}\otimes J_{ri}J_{rj}
  \qquad\text{on }\R^d\otimes\Fcal .
  \label{eq:M}
\end{equation}
We refer to the factor $\R^d$ as the \emph{external factor}.  The operator
$\Mcal$ is self-adjoint: $(U_{ij}\otimes J_{ri}J_{rj})^\dagger
=U_{ji}\otimes J_{rj}J_{ri}$, the two local factors commute because
$i\ne j$, and the ordered sum is invariant under $(i,j)\mapsto(j,i)$.

\begin{proposition}[Vacuum-moment identity]
\label{prop:vacuum-moment}
For every integer $k\ge0$,
\begin{equation}
  \E\tr(E_\iid^k)
  =\sum_{\alpha=1}^d\bigl\langle q_\alpha\otimes\Omega,\,
      \Mcal^k(q_\alpha\otimes\Omega)\bigr\rangle .
  \label{eq:vacuum-moment}
\end{equation}
\end{proposition}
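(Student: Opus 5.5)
The plan is to expand both sides of \eqref{eq:vacuum-moment} as finite sums over index words and match them term by term, using the scalar vacuum rule \eqref{eq:scalar-vacuum-rule}. By \eqref{eq:iid-row},
\[
  E_\iid^k=\frac1{m^k}\sum_{r_1,\ldots,r_k}\ \sum_{\substack{i_1\ne j_1,\ldots,\\ i_k\ne j_k}}
  \Bigl(\prod_{t=1}^k\eta_{r_ti_t}\eta_{r_tj_t}\Bigr)\,U_{i_1j_1}U_{i_2j_2}\cdots U_{i_kj_k},
\]
because the scalar factors commute with the deterministic matrices and can be collected in front. Taking the trace and the expectation, and using linearity, gives
\[
  \E\tr(E_\iid^k)=\frac1{m^k}\sum \tr\bigl(U_{i_1j_1}\cdots U_{i_kj_k}\bigr)\,
  \E\prod_{t}\eta_{r_ti_t}\eta_{r_tj_t}.
\]
All sums are finite, so no convergence issues arise.

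Next I expand $\Mcal^k$ from \eqref{eq:M} in the same way. Products of simple tensors multiply factorwise, so
\[
  \Mcal^k=\frac1{m^k}\sum \bigl(U_{i_1j_1}\cdots U_{i_kj_k}\bigr)\otimes
  \bigl(J_{r_1i_1}J_{r_1j_1}\cdots J_{r_ki_k}J_{r_kj_k}\bigr),
\]
with the sum running over the same index set as above. For an operator $A\otimes B$ one has $\langle q_\alpha\otimes\Omega,(A\otimes B)(q_\alpha\otimes\Omega)\rangle=\langle q_\alpha,Aq_\alpha\rangle\langle\Omega,B\Omega\rangle$. Summing over $\alpha$ turns the first factor into $\tr A$.

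The second factor is $\langle\Omega,p((J_{ri}))\Omega\rangle$ for the monomial $p=\prod_t x_{r_ti_t}x_{r_tj_t}$. The lifts $J_{ri}$ commute: lifts at distinct sites commute, and powers of the same lift commute trivially. So this monomial is a well-defined polynomial in commuting operators, and \eqref{eq:scalar-vacuum-rule} identifies the factor with $\E\prod_t\eta_{r_ti_t}\eta_{r_tj_t}$. Summing over indices then reproduces the expansion of $\E\tr(E_\iid^k)$. The case $k=0$ reduces to $d=\sum_\alpha1$ on both sides.

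The only point needing care is \eqref{eq:scalar-vacuum-rule} itself, namely that $\Ucal$ intertwines $J_{ri}$ with multiplication by $\eta_{ri}$. This follows from the one-site computation leading to \eqref{eq:Jacobi}, which acts factorwise on the product basis. Since the identity is asserted just before the proposition, I take it as given. Everything else is bookkeeping, so I expect no real obstacle.
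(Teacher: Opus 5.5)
Your proposal is correct and matches the paper's term-by-term verification: expand $\Mcal^k$, split each vacuum matrix element into a trace factor and a scalar factor, and apply \eqref{eq:scalar-vacuum-rule}. The paper gives this as its second argument; it leads with the shorter observation that under $I_d\otimes\Ucal$ the operator $\Mcal^k$ becomes multiplication by $E_\iid^k$, so each summand equals $\E[q_\alpha^\T E_\iid^k q_\alpha]$ directly.
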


\begin{proof}
Under the identification $I_d\otimes\Ucal$, the operator $\Mcal$ becomes
multiplication of $\R^d$-valued functions of the array $(\eta_{ri})$ by the
matrix-valued function $E_\iid$ of \eqref{eq:iid-row}, and $\Mcal^k$
becomes multiplication by $E_\iid^k$.  The vector $q_\alpha\otimes\Omega$
corresponds to the constant function $q_\alpha$, so the $\alpha$-th summand
in \eqref{eq:vacuum-moment} equals $\E[q_\alpha^\T E_\iid^kq_\alpha]$, and
summing over $\alpha$ gives $\E\tr(E_\iid^k)$.

The same identity can be verified term by term.  Expanding the $k$-th power
of \eqref{eq:M},
\[
  \Mcal^k=\frac1{m^k}
  \sum_{\substack{r_1,\ldots,r_k\in[m]\\ i_t\ne j_t\ (1\le t\le k)}}
  U_{i_1j_1}\cdots U_{i_kj_k}\otimes\prod_{t=1}^kJ_{r_ti_t}J_{r_tj_t},
\]
and \eqref{eq:scalar-vacuum-rule} gives
\[
  \bigl\langle q_\alpha\otimes\Omega,\Mcal^k(q_\alpha\otimes\Omega)\bigr\rangle
  =\frac1{m^k}
  \sum_{\substack{r_1,\ldots,r_k\\ i_t\ne j_t}}
  q_\alpha^\T U_{i_1j_1}\cdots U_{i_kj_k}q_\alpha\;
  \E\prod_{t=1}^k\eta_{r_ti_t}\eta_{r_tj_t},
\]
which is exactly the expansion of $\E[q_\alpha^\T E_\iid^kq_\alpha]$
obtained by substituting \eqref{eq:iid-row} $k$ times.
\end{proof}

\begin{example}[The case $k=2$]
\label{ex:k2}
Expanding \eqref{eq:iid-row} twice,
\[
  \E\tr(E_\iid^2)
  =\frac1{m^2}\sum_{r_1,r_2=1}^m
   \sum_{\substack{i_1\ne j_1\\ i_2\ne j_2}}
   \E\bigl[\eta_{r_1i_1}\eta_{r_1j_1}\eta_{r_2i_2}\eta_{r_2j_2}\bigr]
   \tr(U_{i_1j_1}U_{i_2j_2}).
\]
Independence factors the expectation over distinct sites, and a site that
occurs once contributes $\E\eta=0$; since $i_t\ne j_t$, each factor
$\eta_{r_ti_t}\eta_{r_tj_t}$ involves two distinct sites, so every site
occurs twice exactly when $r_1=r_2$ and $\{i_1,j_1\}=\{i_2,j_2\}$, and
then the expectation is $\E\eta_{ri}^2\eta_{rj}^2=1$.  For each ordered
pair $(i_1,j_1)$ there are two admissible ordered pairs $(i_2,j_2)$, namely
$(i_1,j_1)$ and $(j_1,i_1)$, and the $m$ choices of the common row cancel
one factor $1/m$.  Hence
\[
  \E\tr(E_\iid^2)
  =\frac1m\sum_{i\ne j}\bigl[\tr(U_{ij}U_{ij})+\tr(U_{ij}U_{ji})\bigr]
  =\frac1m\sum_{i\ne j}\bigl[(u_i^\T u_j)^2+\|u_i\|^2\|u_j\|^2\bigr].
\]
On the operator side, $J_{r_2i_2}J_{r_2j_2}\Omega$ is the basis vector with
light sites $(r_2,i_2)$ and $(r_2,j_2)$, since only $P^\dagger$ acts
nontrivially on $e_0$.  The inner product of $\Omega$ with
$J_{r_1i_1}J_{r_1j_1}$ applied to that vector is nonzero only if the second
pair of legs returns both sites to level $0$, which forces $r_1=r_2$ and
$\{i_1,j_1\}=\{i_2,j_2\}$; thus
\[
  \bigl\langle\Omega,J_{r_1i_1}J_{r_1j_1}J_{r_2i_2}J_{r_2j_2}\Omega\bigr\rangle
  =\1_{\{r_1=r_2,\ \{i_1,j_1\}=\{i_2,j_2\}\}},
\]
and substituting into \eqref{eq:vacuum-moment} reproduces the same sum.
The vacuum matrix element is bookkeeping for the surviving terms of the
classical moment expansion.
\end{example}

\subsection{Grading and the three bands}

The \emph{grade} of a pattern is
\begin{equation}
  \gr(\tau)=\sum_{(r,i)\in\sfI}\tau_{ri}=|S_1(\tau)|+2|S_2(\tau)| .
  \label{eq:pattern-grade}
\end{equation}
For an integer $\nu$ let
\begin{equation}
  \Hcal_\nu=\operatorname{span}\{|\tau\rangle:\gr(\tau)=\nu\}\subset\Fcal,
  \qquad
  \Kcal_\nu=\R^d\otimes\Hcal_\nu ,
  \label{eq:grade-sectors}
\end{equation}
so that $\Fcal=\bigoplus_{\nu=0}^{2mn}\Hcal_\nu$ orthogonally, and let
$\pi_\nu$ and $\Pcal_\nu=I_d\otimes\pi_\nu$ be the orthogonal projections
onto $\Hcal_\nu$ and $\Kcal_\nu$ (both zero if $\nu\notin\{0,\ldots,2mn\}$).

Each leg changes the level of one site by $\pm1$ and hence the grade by
$\pm1$.  Expanding $J_{ri}J_{rj}$ by \eqref{eq:J-decomp} expresses each
summand of $\Mcal$ as a sum of sixteen terms $c\,U_{ij}\otimes Z_{ri}W_{rj}$
with $Z,W\in\{P,P^\dagger,R,R^\dagger\}$ and $c\in\{1,\sqrt{b-1},b-1\}$,
each of which maps $\Kcal_\nu$ into $\Kcal_{\nu+2}$, $\Kcal_\nu$, or
$\Kcal_{\nu-2}$.  Consequently $\Pcal_\mu\Mcal\Pcal_\nu=0$ unless
$\mu-\nu\in\{-2,0,2\}$, and
\begin{equation}
  \Mcal=B^++B^0+B^-,\qquad
  B^\Delta:=\sum_{\nu}\Pcal_{\nu+2\Delta}\Mcal\Pcal_\nu
  \quad(\Delta\in\{1,0,-1\}),
  \label{eq:bands}
\end{equation}
where $B^\Delta$ maps $\Kcal_\nu$ into $\Kcal_{\nu+2\Delta}$.  Since
$\Mcal$ is self-adjoint and the $\Pcal_\nu$ are orthogonal projections,
\[
  (B^+)^\dagger=\sum_\nu(\Pcal_{\nu+2}\Mcal\Pcal_\nu)^\dagger
  =\sum_\nu\Pcal_\nu\Mcal\Pcal_{\nu+2}
  =\sum_\mu\Pcal_{\mu-2}\Mcal\Pcal_\mu=B^-,
\]
after the reindexing $\mu=\nu+2$; similarly $(B^0)^\dagger=B^0$.

\subsection{Band inventory}
\label{sec:band-inventory}

Put $h=b-1$.  In this subsection every sum is over $r\in[m]$ and ordered
pairs $i\ne j$.  By \eqref{eq:leg-on-pattern}, a term
$U_{ij}\otimes Z_{ri}W_{rj}$ with legs $Z=|a'\rangle\langle a|$ and
$W=|c'\rangle\langle c|$ acts on $x\otimes|\tau\rangle$, $x\in\R^d$, by
\begin{equation}
  (U_{ij}\otimes Z_{ri}W_{rj})(x\otimes|\tau\rangle)
  =\begin{cases}
    u_i\,(u_j^\T x)\otimes|\tau[i\mapsto a',\,j\mapsto c']\rangle,
      &\tau_{ri}=a\text{ and }\tau_{rj}=c,\\
    0,&\text{otherwise:}
  \end{cases}
  \label{eq:term-action}
\end{equation}
the external vector is contracted against $u_j$ and re-emitted along $u_i$,
while the two sites change level.  Define the light--light and heavy--heavy
families
\begin{equation}
\begin{aligned}
  L_+&=\sum U_{ij}\otimes P^\dagger_{ri}P^\dagger_{rj},&
  L_0&=\sum U_{ij}\otimes(P^\dagger_{ri}P_{rj}+P_{ri}P^\dagger_{rj}),&
  L_-&=L_+^\dagger,\\
  H_+&=\sum U_{ij}\otimes R_{ri}R_{rj},&
  H_0&=\sum U_{ij}\otimes(R_{ri}R^\dagger_{rj}+R^\dagger_{ri}R_{rj}),&
  H_-&=H_+^\dagger,
\end{aligned}
\label{eq:light-heavy-defs}
\end{equation}
and the mixed families
\begin{equation}
\begin{aligned}
  X_+&=\sum U_{ij}\otimes P^\dagger_{ri}R_{rj},&
  Y_+&=\sum U_{ij}\otimes R_{ri}P^\dagger_{rj},\\
  X_0&=\sum U_{ij}\otimes P^\dagger_{ri}R^\dagger_{rj},&
  Y_0&=\sum U_{ij}\otimes R^\dagger_{ri}P^\dagger_{rj}.
\end{aligned}
\label{eq:mixed-defs}
\end{equation}
For any two legs $Z,W$, using $U_{ij}^\dagger=U_{ji}$, commuting the
factors at distinct sites, and exchanging the dummy indices $i$ and $j$,
\begin{equation}
  \Bigl(\sum U_{ij}\otimes Z_{ri}W_{rj}\Bigr)^\dagger
  =\sum U_{ij}\otimes W^\dagger_{ri}Z^\dagger_{rj} ;
  \label{eq:family-adjoint}
\end{equation}
thus, for instance, $X_+^\dagger=\sum U_{ij}\otimes R^\dagger_{ri}P_{rj}$
and $Y_0^\dagger=\sum U_{ij}\otimes P_{ri}R_{rj}$.  Sorting the sixteen
terms of the expansion of $J_{ri}J_{rj}$ by grade shift, and recording
that a term with zero, one, or two heavy legs carries the coefficient $1$,
$\sqrt h$, or $h$, gives the decomposition
\begin{equation}
\begin{aligned}
  B^+&=\frac1m\bigl[L_++\sqrt h\,(X_++Y_+)+hH_+\bigr],\\
  B^0&=\frac1m\bigl[L_0+\sqrt h\,(X_0+X_0^\dagger+Y_0+Y_0^\dagger)+hH_0\bigr],\\
  B^-&=\frac1m\bigl[L_-+\sqrt h\,(X_+^\dagger+Y_+^\dagger)+hH_-\bigr].
\end{aligned}
\label{eq:exact-inventory}
\end{equation}

\subsection{Row operators and a factorization identity}
\label{sec:row-operators}

For a $3\times3$ matrix $Z$ and a row $r$, define
\begin{equation}
  \Scal_r(Z)=\sum_{i=1}^n u_i\otimes Z_{ri}:\Fcal\to\R^d\otimes\Fcal,
  \qquad
  \Scal_r(Z)^\dagger=\sum_{i=1}^n u_i^\T\otimes(Z^\dagger)_{ri}:
  \R^d\otimes\Fcal\to\Fcal ,
  \label{eq:row-operator}
\end{equation}
where $u_i$ is regarded as the map $\R\to\R^d$, $t\mapsto tu_i$, and
$u_i^\T$ as its adjoint.  Thus $\Scal_r(Z)$ applies $Z$ at one site of row
$r$ and synthesizes an external vector from the frame vectors of that row,
while $\Scal_r(Z)^\dagger$ contracts the external vector against those
frame vectors and applies $Z^\dagger$.  The associated \emph{Gram operator}
\begin{equation}
  \Gcal(Z)=\sum_{r=1}^m\Scal_r(Z)\Scal_r(Z)^\dagger
  =\sum_{r=1}^m\sum_{i,j=1}^nU_{ij}\otimes Z_{ri}(Z^\dagger)_{rj}
  \label{eq:gram-operator}
\end{equation}
is positive semidefinite on $\R^d\otimes\Fcal$.  If $Z$ is
grade-homogeneous, in particular if $Z$ is one of the four legs, then
$\Scal_r(Z)$ and $\Scal_r(Z)^\dagger$ shift the grade by opposite amounts
and $\Gcal(Z)$ preserves grade.

\begin{lemma}[Factorization identity]
\label{lem:factorization}
For all $3\times3$ matrices $Z,W$,
\begin{equation}
  \sum_{r=1}^m\sum_{i\ne j}U_{ij}\otimes Z_{ri}W_{rj}
  =\sum_{r=1}^m\Scal_r(Z)\Scal_r(W^\dagger)^\dagger
   -\sum_{r=1}^m\sum_{i=1}^nU_{ii}\otimes(ZW)_{ri}.
  \label{eq:factorization}
\end{equation}
\end{lemma}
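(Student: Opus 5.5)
The plan is to expand the product $\Scal_r(Z)\Scal_r(W^\dagger)^\dagger$ directly from the definition \eqref{eq:row-operator} and then split off the diagonal. By \eqref{eq:row-operator} applied to $W^\dagger$, and since $(W^\dagger)^\dagger=W$, we have
\[
  \Scal_r(W^\dagger)^\dagger=\sum_{j=1}^n u_j^\T\otimes W_{rj}:\R^d\otimes\Fcal\to\Fcal .
\]
Composing with $\Scal_r(Z)=\sum_i u_i\otimes Z_{ri}$ and using the mixed-product rule for the maps $u_i:\R\to\R^d$ and $u_j^\T:\R^d\to\R$, I will read off
\[
  \Scal_r(Z)\Scal_r(W^\dagger)^\dagger=\sum_{i,j=1}^n (u_iu_j^\T)\otimes Z_{ri}W_{rj}=\sum_{i,j}U_{ij}\otimes Z_{ri}W_{rj}.
\]
The one point to check carefully is that the $\R$ factor is absorbed consistently. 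Acting on $x\otimes\phi$, the right factor gives $\sum_j (u_j^\T x)\,W_{rj}\phi$, and then the left factor gives $\sum_{i,j}u_i(u_j^\T x)\otimes Z_{ri}W_{rj}\phi$. This matches the action \eqref{eq:term-action} of $U_{ij}\otimes Z_{ri}W_{rj}$, so it is essentially immediate on elementary tensors and extends by linearity.

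Next I split the double sum into $i\ne j$ and $i=j$. The diagonal terms are $U_{ii}\otimes Z_{ri}W_{ri}$. Because $Z_{ri}$ and $W_{ri}$ are lifts to the same site, \eqref{eq:site-lift} gives $Z_{ri}W_{ri}=(ZW)_{ri}$, since identity factors multiply trivially. Summing over $r$ and rearranging then yields \eqref{eq:factorization}.

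I expect no real obstacle. The only subtle step is the ordering in the diagonal term. It must be $ZW$ and not $WZ$, because in the composition $\Scal_r(Z)\Scal_r(W^\dagger)^\dagger$ the operator $W$ acts first and $Z$ second. For $i\ne j$ the ordering is immaterial, since lifts to distinct sites commute.
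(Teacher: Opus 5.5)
Your proposal is correct and follows the paper's proof exactly: expand $\Scal_r(Z)\Scal_r(W^\dagger)^\dagger=\sum_{i,j}U_{ij}\otimes Z_{ri}W_{rj}$ from the definition of the row operators, then peel off the diagonal terms $i=j$ as $U_{ii}\otimes(ZW)_{ri}$. Your remark on the order $ZW$ (rather than $WZ$) in the diagonal term is the right point to check.
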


\begin{proof}
By \eqref{eq:row-operator},
$\Scal_r(Z)\Scal_r(W^\dagger)^\dagger=\sum_{i,j}u_iu_j^\T\otimes Z_{ri}W_{rj}$,
and the terms with $i=j$ are $U_{ii}\otimes(ZW)_{ri}$.
\end{proof}

The local products that occur below are
\begin{equation}
\begin{gathered}
  (P^\dagger)^2=R^2=P^\dagger R=P^\dagger R^\dagger=0,\\
  P^\dagger P=R^\dagger R=|1\rangle\langle1|,\qquad
  PP^\dagger=|0\rangle\langle0|,\qquad
  RR^\dagger=|2\rangle\langle2| .
\end{gathered}
  \label{eq:local-products}
\end{equation}
Define the diagonal operators
\begin{equation}
  D_0=\sum_{r,i}U_{ii}\otimes|0\rangle\langle0|_{ri},\qquad
  D_1=\sum_{r,i}U_{ii}\otimes|1\rangle\langle1|_{ri},\qquad
  D_H=\sum_{r,i}U_{ii}\otimes\bigl(|1\rangle\langle1|+|2\rangle\langle2|\bigr)_{ri},
  \label{eq:D-ops}
\end{equation}
the light Gram operator, and the transpose light hop
\begin{equation}
  \Ghat=\Gcal(P^\dagger)=\sum_{r}\sum_{i,j}U_{ij}\otimes P^\dagger_{ri}P_{rj},
  \qquad
  H'=\sum_r\sum_{i\ne j}U_{ij}\otimes P_{ri}P^\dagger_{rj}.
  \label{eq:Ghat}
\end{equation}
By \eqref{eq:family-adjoint}, $H'$ is self-adjoint.

\begin{corollary}
\label{cor:factorization}
The following identities hold:
\begin{align}
  L_+&=\sum_r\Scal_r(P^\dagger)\Scal_r(P)^\dagger,&
  H_+&=\sum_r\Scal_r(R)\Scal_r(R^\dagger)^\dagger,
  \label{eq:fact-plus}\\
  X_+&=\sum_r\Scal_r(P^\dagger)\Scal_r(R^\dagger)^\dagger,&
  X_0&=\sum_r\Scal_r(P^\dagger)\Scal_r(R)^\dagger,
  \label{eq:fact-mixed}\\
  L_0&=(\Ghat-D_1)+H',&
  \Gcal(P)&=D_0+H',
  \label{eq:fact-light}\\
  H_0&=\Gcal(R)+\Gcal(R^\dagger)-D_H .
  \label{eq:fact-heavy}
\end{align}
\end{corollary}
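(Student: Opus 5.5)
Every identity in the corollary follows by applying the factorization identity, Lemma~\ref{lem:factorization}, with a suitable pair of legs $(Z,W)$ and then simplifying the diagonal correction $\sum_{r,i}U_{ii}\otimes(ZW)_{ri}$ with the local products \eqref{eq:local-products}. The plan is to treat each family in \eqref{eq:light-heavy-defs}--\eqref{eq:mixed-defs} in turn, as follows.

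\emph{Raising and mixed families.} For $L_+$, take $Z=W=P^\dagger$, so that $\Scal_r(W^\dagger)=\Scal_r(P)$. The diagonal correction involves $(P^\dagger)^2=0$, which gives \eqref{eq:fact-plus}. For $H_+$, take $Z=W=R$, so that $W^\dagger=R^\dagger$, and use $R^2=0$. For $X_+$, take $Z=P^\dagger$ and $W=R$, which gives $\Scal_r(R^\dagger)^\dagger$ with correction $P^\dagger R=0$. For $X_0$, take $Z=P^\dagger$ and $W=R^\dagger$, which gives $\Scal_r(R)^\dagger$ with correction $P^\dagger R^\dagger=0$.

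\emph{Light diagonal band.} Split $L_0$ into $\sum U_{ij}\otimes P^\dagger_{ri}P_{rj}$ and $\sum U_{ij}\otimes P_{ri}P^\dagger_{rj}$, both over $i\ne j$. The second piece is $H'$ by definition. For the first, Lemma~\ref{lem:factorization} with $Z=P^\dagger$ and $W=P$ gives $\sum_r\Scal_r(P^\dagger)\Scal_r(P^\dagger)^\dagger=\Ghat$ by \eqref{eq:gram-operator}. Its correction is $\sum U_{ii}\otimes(P^\dagger P)_{ri}=D_1$. Hence $L_0=(\Ghat-D_1)+H'$.

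Next, apply the lemma to $\Gcal(P)=\sum_r\Scal_r(P)\Scal_r(P)^\dagger$ with $Z=P$ and $W=P^\dagger$. This writes $\Gcal(P)$ as the off-diagonal sum $H'$ plus the diagonal terms $U_{ii}\otimes(PP^\dagger)_{ri}$. Those diagonal terms sum to $D_0$, so $\Gcal(P)=D_0+H'$.

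\emph{Heavy diagonal band.} $H_0$ consists of the off-diagonal parts of $\Gcal(R)$ and $\Gcal(R^\dagger)$. The pair $Z=R$, $W=R^\dagger$ gives the diagonal term $RR^\dagger=|2\rangle\langle2|$. The pair $Z=R^\dagger$, $W=R$ gives $R^\dagger R=|1\rangle\langle1|$. The two corrections together sum to $D_H$, which gives \eqref{eq:fact-heavy}.

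All steps are routine bookkeeping. The only point needing care is the adjoint convention in $\Scal_r(W^\dagger)^\dagger$ from \eqref{eq:row-operator}: since $\Scal_r(W^\dagger)^\dagger$ applies $(W^\dagger)^\dagger=W$, the correct row operator must be matched to each family. I would check each case against the adjoint rule \eqref{eq:family-adjoint}.
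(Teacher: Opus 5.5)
Your proposal is correct and follows the paper's proof exactly. The paper also applies Lemma~\ref{lem:factorization} with $(Z,W)$ equal to $(P^\dagger,P^\dagger)$, $(R,R)$, $(P^\dagger,R)$, $(P^\dagger,R^\dagger)$, where the diagonal terms vanish, and then with $(P^\dagger,P)$, $(P,P^\dagger)$, $(R,R^\dagger)$, $(R^\dagger,R)$, reading off the corrections $D_1$, $D_0$, and the two halves of $D_H$ from \eqref{eq:local-products}; your handling of the convention $\Scal_r(W^\dagger)^\dagger=\sum_j u_j^\T\otimes W_{rj}$ is right in every case.
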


\begin{proof}
Apply Lemma~\ref{lem:factorization} with $(Z,W)$ equal to
$(P^\dagger,P^\dagger)$, $(R,R)$, $(P^\dagger,R)$, and
$(P^\dagger,R^\dagger)$; by \eqref{eq:local-products} the diagonal term
vanishes in each case, which gives \eqref{eq:fact-plus} and
\eqref{eq:fact-mixed}.  With $(Z,W)=(P^\dagger,P)$ the left side of
\eqref{eq:factorization} is the first half of $L_0$ and the right side is
$\Ghat-D_1$; adding $H'$ gives the first identity in \eqref{eq:fact-light}.
With $(Z,W)=(P,P^\dagger)$ the left side is $H'$ and the right side is
$\Gcal(P)-D_0$.  Finally, $(Z,W)=(R,R^\dagger)$ and $(R^\dagger,R)$ express
the two halves of $H_0$ as $\Gcal(R)-\sum U_{ii}\otimes|2\rangle\langle2|_{ri}$
and $\Gcal(R^\dagger)-\sum U_{ii}\otimes|1\rangle\langle1|_{ri}$.
\end{proof}

The families $Y_+$ and $Y_0$ are not treated by factorization.  For $Y_+$
the diagonal term in \eqref{eq:factorization} would be
$\sum U_{ii}\otimes(RP^\dagger)_{ri}$ with $RP^\dagger=|2\rangle\langle0|$, a
same-site transition from level $0$ to level $2$ that does not occur in
$\Mcal$; for $Y_0$ the diagonal term vanishes, but the resulting
Cauchy--Schwarz bound would involve $\Gcal(P)$, whose norm is of order $m$.
Both families are bounded directly in Section~\ref{sec:hop}, using the fact
that a heavy leg can act only at an occupied site of the same row.

\section{The shared-factor light-sector inequality}
\label{sec:light-sector}

The operator $\Ghat=\Gcal(P^\dagger)$ moves one light site of a row to a
free site of the same row, or leaves it in place through the terms $i=j$,
with the external action $U_{ij}$.  It preserves the set of heavy sites and
the number of light sites.  For a set $T\subset\sfI$ and an integer
$\ell\ge0$, write $|S;T\rangle$ for the basis vector with light sites $S$
and heavy sites $T$, and let
\begin{equation}
  F_T=\sfI\setminus T,\qquad
  \sfK_{T,\ell}=\operatorname{span}\{|S;T\rangle:S\subset F_T,\ |S|=\ell\}.
  \label{eq:fixed-light-block}
\end{equation}
We call $\sfK_{T,\ell}$ a \emph{hard-core} block: its basis vectors are
indexed by sets of $\ell$ distinct light sites, so no site is occupied
twice, in contrast with the labelled tensor product introduced in the proof
below, in which repetitions are allowed.  Each $\R^d\otimes\sfK_{T,\ell}$ is
invariant under $\Ghat$, and
$\Hcal_\nu=\bigoplus_{T:\,2|T|\le\nu}\sfK_{T,\nu-2|T|}$.

\begin{lemma}[Light-sector lemma]
\label{lem:LSL}
On $\R^d\otimes\sfK_{T,\ell}$, $\Ghat=0$ if $\ell=0$, and
\begin{equation}
  0\preceq\Ghat\preceq(d+\ell-1)\,I\qquad\text{if }\ell\ge1 .
  \label{eq:LSL-block}
\end{equation}
Consequently, for every $\kappa\ge0$,
\begin{equation}
  0\preceq\Ghat\preceq(d+\kappa-1)\,I\preceq(d+\kappa)\,I
  \qquad\text{on }\Kcal_\kappa .
  \label{eq:LSL-grade}
\end{equation}
\end{lemma}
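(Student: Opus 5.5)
The plan is to exhibit $\Ghat$ on a hard-core block as a Gram operator $\sum_r W_r^\dagger W_r$, pass to the companion operator $\mathbf W\mathbf W^\dagger$ (which has the same nonzero spectrum), and split that operator into a "same-site" part of norm at most $d$ and $\ell-1$ "exchange" parts of norm at most one each.

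\emph{Setup.} Fix $T$ and $\ell$. Positivity is immediate, since $\Ghat=\Gcal(P^\dagger)$ is a Gram operator by \eqref{eq:gram-operator}. If $\ell=0$, every $P_{ri}$ annihilates $|\emptyset;T\rangle$, so $\Scal_r(P^\dagger)^\dagger=\sum_i u_i^\T\otimes P_{ri}$ vanishes on the block and $\Ghat=0$. For $\ell\ge1$, let
\[
  W_r=\Scal_r(P^\dagger)^\dagger:\ \R^d\otimes\sfK_{T,\ell}\to\sfK_{T,\ell-1},
\]
which removes one light site of row $r$ and contracts the external vector against the corresponding $u_i$. Then $\Ghat=\sum_rW_r^\dagger W_r=\mathbf W^\dagger\mathbf W$, where $\mathbf W=\bigoplus_rW_r$ maps into $\bigoplus_{r\in[m]}\sfK_{T,\ell-1}$. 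It therefore suffices to bound $\|\mathbf W\mathbf W^\dagger\|$. Its $(r,r')$ block is $W_rW_{r'}^\dagger$: starting from a row label $r'$ and a light set $S'$ of size $\ell-1$, it adds a light site $(r',j)$ with $j$ free and emits $u_j$, then removes a light site $(r,i)$ and contracts against $u_i$.

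\emph{Splitting $\mathbf W\mathbf W^\dagger$.} The removed site $(r,i)$ either equals the added site or is one of the $\ell-1$ old sites of $S'$.
\begin{itemize}
\item In the first case, $r=r'$ and the state is unchanged. This gives a diagonal operator with entries $\sum_{j\in F}\|u_j\|^2\le\tr(UU^\T)=d$, where $F$ is the set of columns of row $r'$ that are free in $S'\cup T$.
\item In the second case, the pair (label $r'$, site $(r,i)$) is sent to (label $r$, site $(r',j)$) with amplitude $u_i^\T u_j$. The row label and the removed particle's row are exchanged.
\end{itemize}
Summing the second case over the old particles gives $\mathbf W\mathbf W^\dagger=\mathrm{Diag}+\sum_{k=1}^{\ell-1}\mathcal X_k$, where $\mathcal X_k$ acts on the row label together with the $k$-th particle.

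\emph{Bounding each exchange term.} The intended bound is $\|\mathcal X_k\|\le1$, using the shared external factor as in the overview. On the single-particle space $\ell^2([m]\times\sfI)$ with states $(\text{label},\text{site})$, the kernel is
\[
  (r',(r,i))\mapsto\sum_j u_i^\T u_j\,(r,(r',j)).
\]
For each unordered pair of rows this is a swap composed with the Gram matrix $U_rU_{r'}^\T$ of two column subsets. Each such Gram matrix has norm at most $1$ by \eqref{eq:subset-parseval}, and the swaps act on disjoint index pairs, so the single-particle operator has norm at most $1$. Adding the triangle inequality gives $\|\mathbf W\mathbf W^\dagger\|\le d+\ell-1$, hence \eqref{eq:LSL-block}.

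\emph{Main obstacle: the hard-core constraint.} The exchange terms are defined on the hard-core block: the target site $(r',j)$ must be free with respect to the other particles and to $T$, and the particles are unlabelled sets rather than tuples. I would first embed $\sfK_{T,\ell-1}$ isometrically into the symmetric part of the labelled tensor product $\ell^2([m])\otimes\ell^2(F_T)^{\otimes(\ell-1)}$, normalizing the symmetrization. I would then show that each $\mathcal X_k$, or rather their symmetrized sum, is the compression $Q\,\widetilde{\mathcal X}\,Q$ of the labelled operator by the projection $Q$ onto configurations with distinct sites. A compression does not increase norm, so the labelled bound of $1$ per particle transfers to the block. The step to check carefully is that discarding exactly the terms with colliding sites is the same as compressing by $Q$. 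This means verifying that the indicator "$j$ free" in $W_{r'}^\dagger$ appears on both the input side and the output side. If it does not, I would instead bound the free-site restriction through the subset form of \eqref{eq:subset-parseval}.

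\emph{Grade form.} Finally, \eqref{eq:LSL-grade} follows from the orthogonal decomposition $\Hcal_\kappa=\bigoplus_T\sfK_{T,\kappa-2|T|}$ into $\Ghat$-invariant blocks. On each block $\ell=\kappa-2|T|\le\kappa$, so the block bound is $d+\ell-1\le d+\kappa-1$. Blocks with $\ell=0$ contribute $0$, which is at most $d+\kappa-1$ because $d\ge1$ and $\kappa\ge0$.
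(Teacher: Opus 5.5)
The step you flagged fails as an identity, for a concrete reason. In $\mathbf W\mathbf W^\dagger=(W_rW_{r'}^\dagger)_{r,r'}$ a light site is \emph{added} before one is \emph{removed}, so $P^\dagger_{r'j}$ forces $(r',j)\notin S'\cup T$. Call your hard-core exchange part $X_{\mathrm{hc}}$; it therefore never moves a particle at $(r',i)\in S'$ back onto its own site. The labelled exchange $\widetilde{\mathcal X}_k$ does contain that move ($r=r'$, $j=i$, amplitude $\|u_i\|^2$). The resulting word still has distinct sites, so compression does not discard it.

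Let $\iota$ be the symmetrizing isometry of $\ell^2([m])\otimes\sfK_{T,\ell-1}$ into the labelled space, and let $\widetilde{\mathcal X}=\sum_{k=1}^{\ell-1}\widetilde{\mathcal X}_k$. The usual word count applies: the $(\ell-1)!$ pairs $(\pi,k)$ in which slot $k$ carries the moving particle produce each word of the new configuration exactly once. This gives
\[
  \iota^\dagger\widetilde{\mathcal X}\iota=X_{\mathrm{hc}}+D_{\mathrm{occ}},
\]
where $D_{\mathrm{occ}}$ is diagonal and multiplies $e_{r'}\otimes|S';T\rangle$ by $\sum_{i:(r',i)\in S'}\|u_i\|^2$.

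So $X_{\mathrm{hc}}$ is not a compression, and the norm bound $\|X_{\mathrm{hc}}\|\le\ell-1$ used in your triangle-inequality step does not follow. The compression controls $X_{\mathrm{hc}}$ only from above. That is enough if you argue one-sidedly. Since $D_{\mathrm{occ}}\succeq0$, you get $X_{\mathrm{hc}}\preceq\iota^\dagger\widetilde{\mathcal X}\iota\preceq(\ell-1)I$. Because $\mathbf W\mathbf W^\dagger\succeq0$, its norm is its top eigenvalue, and so $\mathbf W\mathbf W^\dagger=D_{\mathrm{free}}+X_{\mathrm{hc}}\preceq(d+\ell-1)I$. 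Here $D_{\mathrm{free}}\preceq dI$ is your first-case diagonal. Your fallback via subset Parseval is not needed.

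With this repair your proof is a correct variant of the paper's; it differs in where the hard-core compression is done. The paper works in the labelled $\ell$-slot space. It writes $\sum_kQ_k=VV^\dagger$ with insertion maps $V_k$. It then notes that $V^\dagger V$ has diagonal blocks exactly $dI$, and off-diagonal blocks $V_k^\dagger V_l$ that are precisely your row-swap-plus-$UU^\T$ exchanges.

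Only then does the paper compress, on the $\ell$-particle side. There $\Ghat=\Ical_T^\dagger(\sum_kQ_k)\Ical_T$ holds \emph{with equality}, because $\Ghat=\sum_r\Scal_r(P^\dagger)\Scal_r(P^\dagger)^\dagger$ removes before it adds. The allowed targets are those free after the removal, including $i=j$, which is exactly the labelled distinct-site constraint.

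On the companion side only an inequality is available:
\[
  \mathbf W\mathbf W^\dagger=dI+\iota^\dagger\widetilde{\mathcal X}\iota-\bigl[(dI-D_{\mathrm{free}})+D_{\mathrm{occ}}\bigr].
\]
This is why your version needs the Loewner argument. Your route exhibits the $d$ versus $\ell-1$ split directly on the hard-core block. The paper's ordering buys an exact compression identity and the clean relation $V_k^\dagger V_k=dI$.
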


The point of the lemma is that the bound is additive in $d$ and $\ell$.
Bounding the contribution of each of the $\ell$ light sites separately
would give $d\ell$.  The improvement comes from the fact that all light
sites share a single external factor $\R^d$: each site contributes the
rank-one operator $|\omega\rangle\langle\omega|$ for a vector $\omega$ of
squared norm $d$, but two such operators attached to different sites
overlap only through $UU^\T$, an operator of norm one.

\begin{proof}
If $\ell=0$, every $P_{rj}$ annihilates $\sfK_{T,0}$ and $\Ghat=0$.  Assume
$\ell\ge1$.  The proof has three steps: we bound a sum of $\ell$ copies of
a one-particle operator acting on different slots of a labelled tensor
product, identify $\Ghat$ as a compression of that sum, and conclude.

\medskip\noindent\emph{Step 1: a labelled model.}
Let $\sfA=\R^d$, $\sfR=\R^m$, $\sfC=\R^n$, and $\sfX=\sfR\otimes\sfC$; in
this proof $e_r\in\sfR$ and $e_i\in\sfC$ denote standard basis vectors.  For
each row $r$ put
\begin{equation}
  w_r=\sum_{i=1}^nu_i\otimes e_r\otimes e_i\in\sfA\otimes\sfX,
  \qquad
  Q=\sum_{r=1}^m|w_r\rangle\langle w_r| .
  \label{eq:Q-shared}
\end{equation}
Since $\langle w_r,w_{r'}\rangle=\delta_{rr'}\sum_i\|u_i\|^2=d\,\delta_{rr'}$,
the vectors $w_r$ are orthogonal with squared norm $d$; up to a permutation
of tensor factors, $Q=I_\sfR\otimes|\omega\rangle\langle\omega|$ with
$\omega=\sum_iu_i\otimes e_i\in\sfA\otimes\sfC$.  On basis vectors,
\begin{equation}
  Q\bigl(x\otimes e_r\otimes e_j\bigr)=\sum_{i=1}^nU_{ij}x\otimes e_r\otimes e_i
  \qquad(x\in\sfA),
  \label{eq:Q-action}
\end{equation}
because only $w_r$ has a nonzero inner product with
$x\otimes e_r\otimes e_j$, namely $\langle u_j,x\rangle$.

Let $\sfX_1,\ldots,\sfX_\ell$ be labelled copies of $\sfX$, with
$\sfX_k=\sfR_k\otimes\sfC_k$, and let
$\sfH_\ell=\sfA\otimes\sfX_1\otimes\cdots\otimes\sfX_\ell$.  Let $Q_k$ be
the operator $Q$ acting on $\sfA$ and $\sfX_k$, tensored with the identity
on the other slots; all $Q_k$ act on the same external factor $\sfA$.  We
claim that
\begin{equation}
  0\preceq\sum_{k=1}^\ell Q_k\preceq(d+\ell-1)\,I_{\sfH_\ell}.
  \label{eq:block-gram}
\end{equation}
To prove this, let $\sfR_k^0$ be a further copy of $\sfR$, with basis
vectors $e_r^{0,k}$, and set
\begin{equation}
  \sfW_k=\sfR_k^0\otimes\bigotimes_{j\ne k}\sfX_j ,
  \qquad
  V_k\Bigl(e_r^{0,k}\otimes\bigotimes_{j\ne k}x_j\Bigr)
  =\sum_{i=1}^nu_i\otimes x_1\otimes\cdots\otimes(e_r\otimes e_i)_k
    \otimes\cdots\otimes x_\ell ,
  \label{eq:Vk-definition}
\end{equation}
which defines a linear map $V_k:\sfW_k\to\sfH_\ell$; the factors are
placed in slot order.  Thus $V_k$ inserts $w_r$ into the
external factor and slot $k$ and keeps the other slots.  Its adjoint is
\begin{equation}
  V_k^\dagger\Bigl(x\otimes\bigotimes_{j=1}^\ell x_j\Bigr)
  =\sum_{r=1}^m\sum_{i=1}^n\langle u_i,x\rangle\,
   \langle e_r\otimes e_i,x_k\rangle\,
   e_r^{0,k}\otimes\bigotimes_{j\ne k}x_j .
  \label{eq:Vk-adjoint}
\end{equation}
Composing \eqref{eq:Vk-definition} with \eqref{eq:Vk-adjoint} and comparing
with \eqref{eq:Q-action} gives $V_kV_k^\dagger=Q_k$, and since the $w_r$ are
orthogonal with squared norm $d$,
\begin{equation}
  V_k^\dagger V_k=d\,I_{\sfW_k}.
  \label{eq:Vk-diagonal}
\end{equation}
For $k\ne l$ we compute $V_k^\dagger V_l$ on a basis vector
$e_{r'}^{0,l}\otimes(e_r\otimes e_j)_k\otimes\varphi$ of $\sfW_l$, where
$\varphi$ collects the slots other than $k$ and $l$.  First $V_l$ produces
$\sum_iu_i\otimes(e_r\otimes e_j)_k\otimes(e_{r'}\otimes e_i)_l\otimes\varphi$;
then $V_k^\dagger$ contracts the external factor against $u_j$ and reads
the row $r$ from slot $k$, so that
\begin{equation}
  V_k^\dagger V_l\bigl(e_{r'}^{0,l}\otimes(e_r\otimes e_j)_k\otimes\varphi\bigr)
  =e_r^{0,k}\otimes\bigl(e_{r'}\otimes\Gamma e_j\bigr)_l\otimes\varphi,
  \qquad \Gamma=UU^\T .
  \label{eq:Vk-cross-factor}
\end{equation}
Thus $V_k^\dagger V_l$ exchanges the two row registers and applies the
orthogonal projection $\Gamma$ to the column register of slot $l$; in
particular $\|V_k^\dagger V_l\|\le1$.  Now let
$V=[V_1\ \cdots\ V_\ell]:\bigoplus_k\sfW_k\to\sfH_\ell$.  Then
$VV^\dagger=\sum_kV_kV_k^\dagger=\sum_kQ_k$, while $V^\dagger V$ is the
block matrix with entries $V_k^\dagger V_l$, so that for
$z=(z_1,\ldots,z_\ell)$,
\[
  \langle z,V^\dagger Vz\rangle
  \le d\sum_k\|z_k\|^2+\sum_{k\ne l}\|z_k\|\,\|z_l\|
  \le(d+\ell-1)\sum_k\|z_k\|^2 ,
\]
using $\sum_{k\ne l}\|z_k\|\|z_l\|=(\sum_k\|z_k\|)^2-\sum_k\|z_k\|^2
\le(\ell-1)\sum_k\|z_k\|^2$.  Since $VV^\dagger$ and $V^\dagger V$ have the
same nonzero eigenvalues, \eqref{eq:block-gram} follows.

\medskip\noindent\emph{Step 2: compression to the hard-core block.}
For a site $t=(r,i)$ write $e_t=e_r\otimes e_i\in\sfX$.  For
$S=\{t_1,\ldots,t_\ell\}\subset F_T$ define
\begin{equation}
  \iota_T|S;T\rangle
  =\frac1{\sqrt{\ell!}}\sum_{\pi\in\mathfrak S_\ell}
   e_{t_{\pi(1)}}\otimes\cdots\otimes e_{t_{\pi(\ell)}}
  \in\sfX_1\otimes\cdots\otimes\sfX_\ell .
  \label{eq:hard-core-isometry}
\end{equation}
The $\ell!$ tensor words are distinct and orthonormal, so $\iota_T$ is an
isometry from $\sfK_{T,\ell}$ into the symmetric tensors supported on
distinct sites of $F_T$; put $\Ical_T=I_d\otimes\iota_T$.  Expanding
\eqref{eq:Q-action}, $Q_k=\sum_r\sum_{i,j}U_{ij}\otimes
(|e_{(r,i)}\rangle\langle e_{(r,j)}|)_k\otimes I$.  Fix $x\in\R^d$, a site
$t=(r,j)\in S$, and a target $t'=(r,i)$ in the same row.  Among the $\ell!$
words in \eqref{eq:hard-core-isometry} and the $\ell$ slots there are
exactly $\ell!$ pairs $(\pi,k)$ for which slot $k$ carries $t$, and
replacing $t$ by $t'$ in these words produces each word of the set
$S'=(S\setminus\{t\})\cup\{t'\}$ exactly once.  If
$t'\in T\cup(S\setminus\{t\})$, the resulting words are annihilated by
$\iota_T^\dagger$, since they either involve a site of $T$ or repeat a
site.  Otherwise the two normalizations $(\ell!)^{-1/2}$ combine with the
count $\ell!$ to give coefficient one.  Hence
\begin{equation}
  \Ical_T^\dagger\Bigl(\sum_{k=1}^\ell Q_k\Bigr)\Ical_T
  \bigl(x\otimes|S;T\rangle\bigr)
  =\sum_{(r,j)\in S}\
   \sum_{\substack{i\in[n]\\ (r,i)\notin T\cup(S\setminus\{(r,j)\})}}
   U_{ij}x\otimes\bigl|(S\setminus\{(r,j)\})\cup\{(r,i)\};T\bigr\rangle ,
  \label{eq:hard-core-compression-action}
\end{equation}
where the inner sum includes $i=j$.  The right-hand side is exactly
$\Ghat(x\otimes|S;T\rangle)$: $P_{rj}$ removes the light site $(r,j)$, and
$P^\dagger_{ri}$ recreates it at a site $(r,i)$ that is free after the
removal.  Therefore
\begin{equation}
  \Ghat=\Ical_T^\dagger\Bigl(\sum_{k=1}^\ell Q_k\Bigr)\Ical_T
  \qquad\text{on }\R^d\otimes\sfK_{T,\ell}.
  \label{eq:hard-core-compression}
\end{equation}

\medskip\noindent\emph{Step 3: conclusion.}
Compressing \eqref{eq:block-gram} by the isometry $\Ical_T$ gives
\eqref{eq:LSL-block}.  For \eqref{eq:LSL-grade}, decompose $\Kcal_\kappa$
into the invariant blocks $\R^d\otimes\sfK_{T,\ell}$ with
$\ell=\kappa-2|T|\le\kappa$: on blocks with $\ell=0$ the operator
vanishes, and on the others $d+\ell-1\le d+\kappa-1$.  Since $d\ge1$, the
constant $d+\kappa-1$ is nonnegative in all cases.
\end{proof}

\begin{remark}[Relation to quantum cloning]
The labelled bound~\eqref{eq:block-gram} is equivalent to a known sharp
spectral bound.  Indeed, with $\Omega_d=\sum_{a=1}^d e_a\otimes e_a$,
\[
  \omega=\sum_i u_i\otimes e_i=(I_d\otimes U)\Omega_d .
\]
After identifying each column-space factor $\operatorname{im}(U)$ through
$U$ and omitting row-register identity factors, $\sum_{k=1}^{\ell}Q_k$
becomes $d\sum_{k=1}^{\ell}P_{0k}$, where $P_{0k}$ projects onto
$\Omega_d/\sqrt d$ on the common external factor and slot $k$, and acts
as the identity on the other slots.  The equal-weight specialization of
Kay, Ramanathan, and
Kaszlikowski~\cite[Section~IV and Appendix~A]{KayRamanathanKaszlikowski2013}
gives $\|d\sum_{k=1}^{\ell}P_{0k}\|=d+\ell-1$ for $\ell\ge1$.
Thus the labelled bound is attained and cannot be improved before
compression.  For $n>d$, decomposing each column slot into
$\operatorname{im}(U)$ and its orthogonal complement leaves the same sum
with $j\le\ell$ active slots, bounded by $d+j-1$ when $j\ge1$ and zero
when $j=0$.  The proof above gives an elementary insertion-map argument
and then compresses to the distinct-site sector; compression preserves
the upper bound, but need not preserve equality.
\end{remark}

\section{Row-local hop estimates}
\label{sec:hop}

This section bounds operators that act at one or two sites of a single row,
with the external action determined by the frame vectors of that row.  The
estimates depend only on counting sites at a given level.  For
$\psi\in\R^d\otimes\Fcal$ write $\psi=\sum_\tau x_\tau\otimes|\tau\rangle$
with $x_\tau\in\R^d$, so that $\|\psi\|^2=\sum_\tau\|x_\tau\|^2$.  For an
operator $\Tcal$ with values in $\R^d\otimes\Fcal$ and a pattern $\sigma$,
let $(\Tcal\psi)_\sigma=(I_d\otimes\langle\sigma|)\Tcal\psi\in\R^d$ denote
the coefficient of $|\sigma\rangle$; for an operator with values in
$\Fcal$, $(\Tcal\psi)_\sigma=\langle\sigma|\Tcal\psi\rangle\in\R$.  We use
the row-subset inequalities \eqref{eq:subset-parseval} and the elementary
inequality $\|\sum_{\alpha\in I}z_\alpha\|^2\le|I|\sum_{\alpha\in I}\|z_\alpha\|^2$.
For a row $r$ let
\[
  \nu_r(\tau)=|S_1^r(\tau)|+2|S_2^r(\tau)|
\]
be the grade of row $r$ in the pattern $\tau$, so that
$\sum_r\nu_r(\tau)=\gr(\tau)$, and let $\Ncal_r$ be the operator on
$\R^d\otimes\Fcal$ with $\Ncal_r(x\otimes|\tau\rangle)=\nu_r(\tau)\,x\otimes|\tau\rangle$.

In the following lemma, $Z=|a'\rangle\langle a|$ and $W=|c'\rangle\langle c|$
denote arbitrary legs, so that $Z$ changes the level of a site from $a$ to
$a'$ and $W$ from $c$ to $c'$; we use the convention $\max\varnothing=0$.

\begin{lemma}[Row-local hop estimate]
\label{lem:hop}
Let $Z=|a'\rangle\langle a|$ and $W=|c'\rangle\langle c|$ be legs.
\begin{enumerate}[label=\textup{(\roman*)}]
\item \textup{(One site.)}  Fix a row $r$ and an integer $\ell\ge0$, and let
  $\psi$ be supported on patterns $\tau$ with $\nu_r(\tau)=\ell$.  Then the
  one-site map $\Tcal_r=\sum_iu_i^\T\otimes Z_{ri}=\Scal_r(Z^\dagger)^\dagger$
  satisfies
  \begin{equation}
    \|\Tcal_r\psi\|^2\le N\,\|\psi\|^2,
    \qquad
    N=\max\bigl\{|S_{a'}^r(\sigma)|:\nu_r(\sigma)=\ell+a'-a\bigr\}.
    \label{eq:hop-one-site}
  \end{equation}
\item \textup{(Two sites.)}  Let
  $\Tcal=\sum_r\sum_{i\ne j}U_{ij}\otimes Z_{ri}W_{rj}$.  For every
  $\nu\ge0$ and $\psi\in\Kcal_\nu$,
  \begin{equation}
    \|\Tcal\psi\|^2\le N_{\mathrm{out}}\,N_{\mathrm{in}}\,\|\psi\|^2,
    \label{eq:hop-two-site}
  \end{equation}
  where $N_{\mathrm{in}}=\max\{|S_a(\tau)|:\gr(\tau)=\nu\}$ and
  $N_{\mathrm{out}}=\max\{|S_{c'}(\sigma)|:\gr(\sigma)=\nu+(a'-a)+(c'-c),\
  S_{a'}(\sigma)\ne\varnothing\}$.
\end{enumerate}
\end{lemma}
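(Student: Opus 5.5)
Both parts are direct coefficient computations in the pattern basis, using the coefficient notation $(\Tcal\psi)_\sigma$ and Cauchy--Schwarz over the small set of preimages of a given output pattern.

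For (i), I fix an output pattern $\sigma$ and compute $(\Tcal_r\psi)_\sigma$ from \eqref{eq:leg-on-pattern}. A term $u_i^\T\otimes Z_{ri}$ maps $|\tau\rangle$ to $|\sigma\rangle$ only if $\sigma_{ri}=a'$ and $\tau=\sigma[i\mapsto a]$. Hence
\[
(\Tcal_r\psi)_\sigma=\sum_{i\in S^r_{a'}(\sigma)}u_i^\T x_{\sigma[i\mapsto a]} .
\]
This vanishes unless $\nu_r(\sigma)=\ell+a'-a$, since $\psi$ is supported on $\nu_r=\ell$. The elementary inequality with $|I|=|S^r_{a'}(\sigma)|\le N$ and $|u_i^\T x|\le\|x\|$ give
\[
|(\Tcal_r\psi)_\sigma|^2\le N\sum_{i\in S^r_{a'}(\sigma)}\|x_{\sigma[i\mapsto a]}\|^2 .
\]
Summing over $\sigma$, each input pattern $\tau$ occurs for each pair $(\sigma,i)$ with $\tau=\sigma[i\mapsto a]$ and $\tau_{ri}=a$, that is, at most $|S^r_a(\tau)|$ times. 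That multiplicity is not bounded by $1$. So the crude bound $|u_i^\T x|\le\|x\|$ is too weak, and I will use the analysis inequality instead. Write $x_{\sigma[i\mapsto a]}=x_{\tau^{(i)}}$ and reorganize by $\tau$: $\|\Tcal_r\psi\|=\sup_{\|\phi\|=1}\langle\phi,\Tcal_r\psi\rangle$. Dualizing, it suffices to show $\|\Tcal_r^\dagger\phi\|^2\le N\|\phi\|^2$ for $\phi$ supported on row grade $\ell+a'-a$. Here
\[
(\Tcal_r^\dagger\phi)_\tau=\sum_{i\in S^r_a(\tau)}u_i\,\phi_{\tau[i\mapsto a']} .
\]
The synthesis inequality \eqref{eq:subset-parseval} gives $\|(\Tcal_r^\dagger\phi)_\tau\|^2\le\sum_i\phi_{\tau[i\mapsto a']}^2$. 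Summing over $\tau$, each $\sigma$ is counted once per $i\in S^r_{a'}(\sigma)$, hence at most $N$ times. This yields (i), and it is the clean route: use synthesis on the adjoint side and count the multiplicity on the output side.

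For (ii), I write $\Tcal=\sum_r\Scal_r(Z)\,\Scal_r(W^\dagger)^\dagger$, which is the off-diagonal part of Lemma~\ref{lem:factorization}. I do not want the diagonal term, so instead I factor each summand $U_{ij}\otimes Z_{ri}W_{rj}$ as $(u_i\otimes Z_{ri})(u_j^\T\otimes W_{rj})$ and restrict to $i\ne j$ by hand. The plan is again a coefficient computation. For output $\sigma$,
\[
(\Tcal\psi)_\sigma=\sum_r\sum_{i\ne j}u_i\,\bigl(u_j^\T x_{\tau(r,i,j)}\bigr),
\]
where $\tau(r,i,j)=\sigma[i\mapsto a,\,j\mapsto c]$ and the sum runs over $\sigma_{ri}=a'$ and $\sigma_{rj}=c'$. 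I apply Cauchy--Schwarz over the $j$-sites: there are at most $N_{\mathrm{out}}$ pairs $(r,j)$ with $\sigma_{rj}=c'$, and nonvanishing forces $S_{a'}(\sigma)\ne\varnothing$. For fixed $(r,j)$, the synthesis inequality bounds $\|\sum_i u_i(\cdot)\|^2$ by $\sum_i(u_j^\T x_\tau)^2$. Summing over $\sigma$ and reindexing by $(\tau,r,i,j)$ with $\tau_{ri}=a$, the analysis inequality in $j$ gives $\sum_j(u_j^\T x_\tau)^2\le\|x_\tau\|^2$ for each fixed $(\tau,r,i)$. What remains is the count of $i$, which is at most $|S_a(\tau)|\le N_{\mathrm{in}}$. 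This produces $N_{\mathrm{out}}N_{\mathrm{in}}\|\psi\|^2$.

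The main obstacle is ordering the two Cauchy--Schwarz steps so that each of synthesis and analysis is applied to sites of a single row $r$, with $\tau$ fixed and only one index varying. I need to check that, for fixed $\sigma$ and $(r,j)$, the map $i\mapsto\tau$ is injective, and that for fixed $\tau$ and $(r,i)$ the map $j\mapsto\sigma$ is injective, so that the reindexing is bijective.
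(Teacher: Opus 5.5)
Your proposal is correct. Part (ii) follows the paper's argument step for step:
\begin{itemize}
\item Cauchy--Schwarz over the at most $N_{\mathrm{out}}$ pairs $(r,j)$ with $\sigma_{rj}=c'$.
\item Synthesis in $i$ for fixed $(\sigma,r,j)$.
\item The reindexing $(\sigma,r,i,j)\mapsto(\tau,r,i,j)$. This is injective simply because $\sigma=\tau[i\mapsto a',j\mapsto c']$ is recovered from $\tau$. The distinctness of the $i$'s and $j$'s that you worry about is automatic.
\item Analysis in $j$ for fixed $(\tau,r,i)$, and then the count $|S_a(\tau)|\le N_{\mathrm{in}}$.
\end{itemize}

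Part (i) differs from the paper only in that you pass to the adjoint. You bound $\|\Tcal_r^\dagger\phi\|^2$ by synthesis on the input side and count the multiplicity $|S^r_{a'}(\sigma)|\le N$ on the output side. This is legitimate: $\Tcal_r\psi$ lies in the span of patterns of row grade $\ell+a'-a$, so the test vector $\phi$ may be restricted there.

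The paper proves (i) directly. Your discarded first attempt already works if you keep $(u_i^\T x_{\sigma[i\mapsto a]})^2$ rather than bounding it by $\|x_{\sigma[i\mapsto a]}\|^2$. After the bijection $(\sigma,i)\mapsto(\sigma[i\mapsto a],i)$ onto pairs with $i\in S^r_a(\tau)$, the analysis inequality gives $\sum_{i\in S^r_a(\tau)}(u_i^\T x_\tau)^2\le\|x_\tau\|^2$, and hence the bound $N\|\psi\|^2$, with no duality step.

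The two versions of (i) are transposes of each other, so neither buys anything substantive. The direct one is a step shorter and has the same shape as the argument you then use for (ii).
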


\begin{proof}
(i) For a pattern $\sigma$ with $\nu_r(\sigma)=\ell+a'-a$, the coefficient
of $|\sigma\rangle$ in $\Tcal_r\psi$ collects the inputs that differ from
$\sigma$ at one site of row $r$ where $\sigma$ has level $a'$ and the input
has level $a$:
\[
  (\Tcal_r\psi)_\sigma=\sum_{i\in S_{a'}^r(\sigma)}u_i^\T x_{\sigma[i\mapsto a]},
\]
with $x_\tau=0$ when $\tau$ is outside the support of $\psi$.  Hence
\begin{align*}
  \|\Tcal_r\psi\|^2
  &=\sum_\sigma\Bigl|\sum_{i\in S_{a'}^r(\sigma)}u_i^\T x_{\sigma[i\mapsto a]}\Bigr|^2
  \le N\sum_\sigma\sum_{i\in S_{a'}^r(\sigma)}
     \bigl(u_i^\T x_{\sigma[i\mapsto a]}\bigr)^2\\
  &=N\sum_\tau\sum_{i\in S_a^r(\tau)}(u_i^\T x_\tau)^2
  \le N\sum_\tau\|x_\tau\|^2 ,
\end{align*}
where the equality is the change of variables
$(\sigma,i)\mapsto(\tau,i)=(\sigma[i\mapsto a],i)$, a bijection onto pairs
with $i\in S_a^r(\tau)$ whose inverse is $\sigma=\tau[i\mapsto a']$, and the
last step is the analysis inequality in \eqref{eq:subset-parseval}.

(ii) For an output pattern $\sigma$, the coefficient $(\Tcal\psi)_\sigma$
collects the inputs $\tau$ and ordered pairs $(i,j)$ with $\tau_{ri}=a$,
$\tau_{rj}=c$, and $\tau[i\mapsto a',j\mapsto c']=\sigma$; given
$\sigma,r,i,j$, the input is $\tau=\sigma[i\mapsto a,j\mapsto c]$.  Hence
\[
  (\Tcal\psi)_\sigma=\sum_r\sum_{j\in S_{c'}^r(\sigma)}z_{\sigma,r,j},
  \qquad
  z_{\sigma,r,j}=\sum_{i\in S_{a'}^r(\sigma)\setminus\{j\}}
    u_i\,\bigl(u_j^\T x_{\sigma[i\mapsto a,\,j\mapsto c]}\bigr).
\]
If $S_{a'}(\sigma)=\varnothing$ the coefficient vanishes, so we may restrict
to output patterns with $S_{a'}(\sigma)\ne\varnothing$, for which the number
of pairs $(r,j)$ in the outer sum is $|S_{c'}(\sigma)|\le N_{\mathrm{out}}$.
By the finite-family inequality, and then the synthesis inequality in
\eqref{eq:subset-parseval} applied for fixed $(\sigma,r,j)$ to the distinct
columns $i$ of row $r$,
\[
  \|\Tcal\psi\|^2
  \le N_{\mathrm{out}}\sum_\sigma\sum_r\sum_{j\in S_{c'}^r(\sigma)}\|z_{\sigma,r,j}\|^2
  \le N_{\mathrm{out}}\sum_\sigma\sum_r\sum_{j\in S_{c'}^r(\sigma)}
     \sum_{i\in S_{a'}^r(\sigma)\setminus\{j\}}
     \bigl(u_j^\T x_{\sigma[i\mapsto a,\,j\mapsto c]}\bigr)^2 .
\]
The map $(\sigma,r,i,j)\mapsto(\tau,r,i,j)$ with
$\tau=\sigma[i\mapsto a,j\mapsto c]$ is a bijection from the index set of
the last sum onto the set of $(\tau,r,i,j)$ with $\gr(\tau)=\nu$,
$i\in S_a^r(\tau)$, $j\in S_c^r(\tau)$, and $i\ne j$; its inverse is
$\sigma=\tau[i\mapsto a',j\mapsto c']$.  Re-indexing, and applying the
analysis inequality in \eqref{eq:subset-parseval} to the sum over $j$ for
fixed $(\tau,r,i)$,
\[
  \|\Tcal\psi\|^2
  \le N_{\mathrm{out}}\sum_\tau\sum_r\sum_{i\in S_a^r(\tau)}
     \sum_{j\in S_c^r(\tau)\setminus\{i\}}(u_j^\T x_\tau)^2
  \le N_{\mathrm{out}}\sum_\tau|S_a(\tau)|\,\|x_\tau\|^2
  \le N_{\mathrm{out}}N_{\mathrm{in}}\|\psi\|^2 .
  \qedhere
\]
\end{proof}

\begin{corollary}[Directional bounds]
\label{cor:directional}
For every $\nu\ge0$,
\begin{equation}
  \|Y_+|_{\Kcal_\nu}\|\le\nu,\qquad
  \|Y_0|_{\Kcal_\nu}\|\le\frac{\nu}{\sqrt2},\qquad
  \|H'|_{\Kcal_\nu}\|\le\nu .
  \label{eq:directional-bounds}
\end{equation}
\end{corollary}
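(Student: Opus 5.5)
The plan is to apply the two-site estimate, Lemma~\ref{lem:hop}(ii), to each of the three families. Each is of the form $\Tcal=\sum_r\sum_{i\ne j}U_{ij}\otimes Z_{ri}W_{rj}$ with $Z,W$ legs, so each bound reduces to computing the two counting maxima $N_{\mathrm{in}}$ and $N_{\mathrm{out}}$. Everything then follows from the grade identity $\gr(\tau)=|S_1(\tau)|+2|S_2(\tau)|$, together with the nonemptiness constraint $S_{a'}(\sigma)\ne\varnothing$ that is built into $N_{\mathrm{out}}$.

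For $Y_+$, take $Z=R=|2\rangle\langle1|$ and $W=P^\dagger=|1\rangle\langle0|$, so $a=1$, $a'=2$, $c=0$, $c'=1$.
\begin{itemize}
\item The input count is $N_{\mathrm{in}}=\max|S_1(\tau)|\le\nu$ over patterns of grade $\nu$.
\item Outputs have grade $\nu+2$ and satisfy $S_2(\sigma)\ne\varnothing$. Hence $|S_1(\sigma)|=\nu+2-2|S_2(\sigma)|\le\nu$, giving $N_{\mathrm{out}}\le\nu$.
\item Therefore $\|Y_+|_{\Kcal_\nu}\|^2\le\nu^2$.
\end{itemize}

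For $Y_0$, take $Z=R^\dagger=|1\rangle\langle2|$ and $W=P^\dagger$, so $a=2$, $a'=1$, $c=0$, $c'=1$.
\begin{itemize}
\item Now $N_{\mathrm{in}}=\max|S_2(\tau)|\le\nu/2$.
\item Outputs have grade $\nu$ and nonempty $S_1(\sigma)$. Counting $|S_{c'}(\sigma)|=|S_1(\sigma)|\le\gr(\sigma)=\nu$ gives $N_{\mathrm{out}}\le\nu$.
\item The product is at most $\nu^2/2$, which yields the stated $\nu/\sqrt2$.
\end{itemize}

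For $H'=\sum U_{ij}\otimes P_{ri}P^\dagger_{rj}$, take $Z=P=|0\rangle\langle1|$ and $W=P^\dagger$, so $a=1$, $a'=0$, $c=0$, $c'=1$.
\begin{itemize}
\item Here $N_{\mathrm{in}}=\max|S_1(\tau)|\le\nu$.
\item Outputs again have grade $\nu$, so $N_{\mathrm{out}}=\max|S_1(\sigma)|\le\nu$. The constraint $S_0(\sigma)\ne\varnothing$ is not needed here.
\item Thus $\|H'|_{\Kcal_\nu}\|\le\nu$.
\end{itemize}

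When $\nu=0$, the input count vanishes (or $\Kcal_0$ is annihilated by the light-lowering leg), so all three bounds hold trivially.

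There is no real obstacle. The one point needing care is $Y_+$: a naive count would give $N_{\mathrm{out}}\le\nu+2$. It is essential to use that the output carries a heavy site, so that $S_2(\sigma)\ne\varnothing$, and this sharpens the count to $\nu$. I will also double-check that the leg assignments $(a,a',c,c')$ match the order $Z_{ri}W_{rj}$ used in Lemma~\ref{lem:hop}(ii), since $N_{\mathrm{in}}$ counts level-$a$ sites while $N_{\mathrm{out}}$ counts level-$c'$ sites.
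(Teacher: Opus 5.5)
Your proposal is correct and follows the paper's proof: the same application of Lemma~\ref{lem:hop}(ii) with $(a,a',c,c')=(1,2,0,1)$, $(2,1,0,1)$, $(1,0,0,1)$, the same counts $N_{\mathrm{in}}\le\nu,\ \nu/2,\ \nu$ and $N_{\mathrm{out}}\le\nu$, and the same use of $S_2(\sigma)\ne\varnothing$ to get $|S_1(\sigma)|=\nu+2-2|S_2(\sigma)|\le\nu$ for $Y_+$. Your separate treatment of $\nu=0$ is unnecessary, since the general count already gives $N_{\mathrm{in}}=0$ there.
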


\begin{proof}
Apply Lemma~\ref{lem:hop}(ii).  For $Y_+$ we have $(Z,W)=(R,P^\dagger)$,
so $(a,a',c,c')=(1,2,0,1)$: the outputs have grade $\nu+2$ and at least one
heavy site, hence $|S_1(\sigma)|=\nu+2-2|S_2(\sigma)|\le\nu$, while
$|S_1(\tau)|\le\nu$ for inputs; thus $\|Y_+\psi\|^2\le\nu^2\|\psi\|^2$.  For
$Y_0$ we have $(Z,W)=(R^\dagger,P^\dagger)$ and $(a,a',c,c')=(2,1,0,1)$: the
outputs have grade $\nu$, so $|S_1(\sigma)|\le\nu$, while
$|S_2(\tau)|\le\nu/2$ for inputs; thus $\|Y_0\psi\|^2\le\nu^2\|\psi\|^2/2$.
For $H'$ we have $(Z,W)=(P,P^\dagger)$ and $(a,a',c,c')=(1,0,0,1)$: outputs
have grade $\nu$ and $|S_1(\sigma)|\le\nu$, and $|S_1(\tau)|\le\nu$ for
inputs.
\end{proof}

In each case the counts that enter are the number of light sites of the
output and the number of level-$a$ sites of the input, both of order $\nu$.
Applied to $L_+$, the same lemma would produce the number of free sites of
the input, which is of order $mn$; for this reason $L_+$ is treated by
Lemma~\ref{lem:LSL} instead.  Table~\ref{tab:hop} summarizes the five
applications of Lemma~\ref{lem:hop} used in this paper.

\begin{table}[ht]
\centering
\small
\renewcommand{\arraystretch}{1.25}
\begin{tabular}{lcccll}
\toprule
Operator & $(Z,W)$ or $Z$ & $(a,a')$ & $(c,c')$ & Counts & Bound\\
\midrule
$Y_+$ & $(R,P^\dagger)$ & $(1,2)$ & $(0,1)$
  & $N_{\mathrm{out}}\le\nu$, $N_{\mathrm{in}}\le\nu$ & $\|Y_+|_{\Kcal_\nu}\|\le\nu$\\
$Y_0$ & $(R^\dagger,P^\dagger)$ & $(2,1)$ & $(0,1)$
  & $N_{\mathrm{out}}\le\nu$, $N_{\mathrm{in}}\le\nu/2$ & $\|Y_0|_{\Kcal_\nu}\|\le\nu/\sqrt2$\\
$H'$ & $(P,P^\dagger)$ & $(1,0)$ & $(0,1)$
  & $N_{\mathrm{out}}\le\nu$, $N_{\mathrm{in}}\le\nu$ & $\|H'|_{\Kcal_\nu}\|\le\nu$\\
\midrule
$\Scal_r(R)^\dagger$ & $R^\dagger$ & $(2,1)$ & ---
  & $N\le(\ell-1)_+$ & $\Scal_r(R)\Scal_r(R)^\dagger\preceq\Ncal_r$\\
$\Scal_r(R^\dagger)^\dagger$ & $R$ & $(1,2)$ & ---
  & $N\le\lfloor(\ell+1)/2\rfloor$ & $\Scal_r(R^\dagger)\Scal_r(R^\dagger)^\dagger\preceq\Ncal_r$\\
\bottomrule
\end{tabular}
\caption{The applications of Lemma~\ref{lem:hop}.  The first three rows use
part~(ii) on $\Kcal_\nu$ (Corollary~\ref{cor:directional}); the outputs
have grade $\nu+2$, $\nu$, and $\nu$, respectively.  The last two rows use
part~(i) on the row-grade stratum $\ell$ (Corollary~\ref{cor:heavy-gram}),
where the image lies in the stratum $\ell-1$, respectively $\ell+1$; in
both cases $N\le\ell$ for $\ell\ge1$, and the operator vanishes on the
stratum $\ell=0$.}
\label{tab:hop}
\end{table}

\begin{corollary}[Heavy Gram budgets]
\label{cor:heavy-gram}
For every row $r$,
\[
  \Scal_r(R)\Scal_r(R)^\dagger\preceq\Ncal_r
  \qquad\text{and}\qquad
  \Scal_r(R^\dagger)\Scal_r(R^\dagger)^\dagger\preceq\Ncal_r .
\]
Consequently,
\begin{equation}
  \Gcal(R)\preceq\nu\,I\quad\text{and}\quad\Gcal(R^\dagger)\preceq\nu\,I
  \qquad\text{on }\Kcal_\nu .
  \label{eq:heavy-gram}
\end{equation}
\end{corollary}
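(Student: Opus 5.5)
The plan is to prove the row-local Loewner bounds from Lemma~\ref{lem:hop}(i) applied row by row on the row-grade strata, and then sum over rows. Fix a row $r$. Both $\Scal_r(R)^\dagger$ and $\Scal_r(R^\dagger)^\dagger$ act only at sites of row $r$, so they map the stratum $\{\nu_r=\ell\}$ into a single stratum: $\{\nu_r=\ell-1\}$, respectively $\{\nu_r=\ell+1\}$. These strata are mutually orthogonal. Hence $\Scal_r(Z)^\dagger$ is block diagonal with respect to the decomposition by $\nu_r$ in the domain, with images in orthogonal pieces. It therefore suffices to show $\|\Scal_r(Z)^\dagger\psi\|^2\le\ell\|\psi\|^2$ for $\psi$ supported on $\nu_r=\ell$. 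Once that holds, $\langle\psi,\Scal_r(Z)\Scal_r(Z)^\dagger\psi\rangle=\|\Scal_r(Z)^\dagger\psi\|^2\le\langle\psi,\Ncal_r\psi\rangle$ for general $\psi$, by orthogonality of the strata on both sides.

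The two cases are as follows.

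\emph{First case.} For $\Scal_r(R)^\dagger=\sum_iu_i^\T\otimes R^\dagger_{ri}$, apply Lemma~\ref{lem:hop}(i) with $Z=R^\dagger$, so $(a,a')=(2,1)$. The outputs $\sigma$ have $\nu_r(\sigma)=\ell-1$, and $N$ is the maximal number of light sites in row $r$ among such patterns. Since $|S_1^r(\sigma)|\le\nu_r(\sigma)$, we get $N\le(\ell-1)_+\le\ell$. When $\ell=0$ there is no heavy site in the row, so the operator vanishes.

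\emph{Second case.} For $\Scal_r(R^\dagger)^\dagger=\sum_iu_i^\T\otimes R_{ri}$, use $Z=R$ with $(a,a')=(1,2)$. The outputs have $\nu_r=\ell+1$, and the count of heavy sites satisfies $|S_2^r(\sigma)|\le\lfloor(\ell+1)/2\rfloor$. This is at most $\ell$ for $\ell\ge1$. When $\ell=0$ there are no light sites in row $r$, so the operator vanishes there as well. This matches Table~\ref{tab:hop}.

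For the consequence \eqref{eq:heavy-gram}, sum the row bounds: $\Gcal(Z)=\sum_r\Scal_r(Z)\Scal_r(Z)^\dagger\preceq\sum_r\Ncal_r$. The operator $\sum_r\Ncal_r$ acts on $x\otimes|\tau\rangle$ as multiplication by $\sum_r\nu_r(\tau)=\gr(\tau)$, so it equals $\nu I$ on $\Kcal_\nu$. Both Gram operators preserve grade, since they are built from legs, and $\Kcal_\nu$ is invariant. The Loewner inequality therefore restricts to $\Kcal_\nu$ and gives $\Gcal(R),\Gcal(R^\dagger)\preceq\nu I$ there.

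The only delicate point is the orthogonality bookkeeping that passes from the stratum-wise norm bound to the operator inequality with $\Ncal_r$. Cross terms between different strata vanish because images of different input strata lie in different output strata. The rest is direct counting.
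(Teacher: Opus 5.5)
Your proposal is correct and matches the paper's proof essentially step for step. Like the paper, you decompose $\psi$ into row-grade strata, apply Lemma~\ref{lem:hop}(i) with $(a,a')=(2,1)$ and $(1,2)$ using the counts $\ell-1\le\ell$ and $\lfloor(\ell+1)/2\rfloor\le\ell$, and use orthogonality of the image strata to get the bound by $\Ncal_r$. You then sum over rows using $\sum_r\Ncal_r=\nu I$ on $\Kcal_\nu$.
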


\begin{proof}
Write $\psi=\sum_\ell\psi_{r,\ell}$ with $\psi_{r,\ell}$ supported on
patterns with $\nu_r(\tau)=\ell$.  The map
$\Scal_r(R)^\dagger=\sum_iu_i^\T\otimes R^\dagger_{ri}$ is the one-site map
of Lemma~\ref{lem:hop}(i) with $Z=R^\dagger$, $(a,a')=(2,1)$; it sends the
stratum $\ell$ into the stratum $\ell-1$, so the images of different strata
are orthogonal, and
\begin{align*}
  \langle\psi,\Scal_r(R)\Scal_r(R)^\dagger\psi\rangle
  &=\sum_{\ell\ge1}\|\Scal_r(R)^\dagger\psi_{r,\ell}\|^2
  \le\sum_{\ell\ge1}\max\{|S_1^r(\sigma)|:\nu_r(\sigma)=\ell-1\}\,
       \|\psi_{r,\ell}\|^2\\
  &\le\sum_{\ell\ge1}\ell\,\|\psi_{r,\ell}\|^2
  =\langle\psi,\Ncal_r\psi\rangle ,
\end{align*}
where the $\ell=0$ term vanishes and, for $\ell\ge1$,
$|S_1^r(\sigma)|\le\nu_r(\sigma)=\ell-1\le\ell$.  Similarly,
$\Scal_r(R^\dagger)^\dagger=\sum_iu_i^\T\otimes R_{ri}$ has $(a,a')=(1,2)$
and sends the stratum $\ell$ into the stratum $\ell+1$; the relevant count
is $|S_2^r(\sigma)|\le\lfloor(\ell+1)/2\rfloor\le\ell$ for $\ell\ge1$, while
for $\ell=0$ the row has no light site and the image is zero.  Summing over
$r$ and using $\sum_r\Ncal_r=\nu I$ on $\Kcal_\nu$ gives
\eqref{eq:heavy-gram}.
\end{proof}

\section{Band estimates}
\label{sec:band-bounds}

\subsection{A block Cauchy--Schwarz inequality}

\begin{lemma}
\label{lem:block-CS}
Let $\mathcal X,\mathcal Y,\mathcal Z$ be finite-dimensional Hilbert
spaces, let $\Pcal_{\mathrm{in}},\Pcal_{\mathrm{mid}},\Pcal_{\mathrm{out}}$
be orthogonal projections on them, and let $A_r:\mathcal X\to\mathcal Y$
and $B_r:\mathcal Y\to\mathcal Z$ for $r\in[m]$.  Then
\begin{equation}
  \Bigl\|\Pcal_{\mathrm{out}}\Bigl(\sum_rB_r\Pcal_{\mathrm{mid}}A_r\Bigr)
    \Pcal_{\mathrm{in}}\Bigr\|^2
  \le\Bigl\|\Pcal_{\mathrm{in}}\Bigl(\sum_rA_r^\dagger\Pcal_{\mathrm{mid}}A_r\Bigr)
    \Pcal_{\mathrm{in}}\Bigr\|
  \cdot\Bigl\|\Pcal_{\mathrm{out}}\Bigl(\sum_rB_r\Pcal_{\mathrm{mid}}B_r^\dagger\Bigr)
    \Pcal_{\mathrm{out}}\Bigr\| .
  \label{eq:block-CS}
\end{equation}
\end{lemma}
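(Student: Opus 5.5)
The plan is to write the middle operator as a product of a column and a row block operator, and then apply the elementary bound $\|XY\|\le\|X\|\,\|Y\|$ together with the $C^*$-identity $\|X\|^2=\|XX^\dagger\|$. Let $\mathcal Y^{\oplus m}$ denote the direct sum of $m$ copies of $\mathcal Y$. Define
\[
  \mathbf A:\mathcal X\to\mathcal Y^{\oplus m},\qquad
  \mathbf A x=\bigl(\Pcal_{\mathrm{mid}}A_r\Pcal_{\mathrm{in}}x\bigr)_{r=1}^m ,
\]
\[
  \mathbf B:\mathcal Y^{\oplus m}\to\mathcal Z,\qquad
  \mathbf B(y_r)_{r}=\sum_r\Pcal_{\mathrm{out}}B_r\Pcal_{\mathrm{mid}}y_r .
\]
Since $\Pcal_{\mathrm{mid}}^2=\Pcal_{\mathrm{mid}}$, the composite $\mathbf B\mathbf A$ equals $\Pcal_{\mathrm{out}}\bigl(\sum_rB_r\Pcal_{\mathrm{mid}}A_r\bigr)\Pcal_{\mathrm{in}}$, which is exactly the operator on the left of \eqref{eq:block-CS}. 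It is fine to insert $\Pcal_{\mathrm{mid}}$ in both factors, because it is idempotent.

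Next compute the two Gram operators. The adjoint of $\mathbf A$ is $\mathbf A^\dagger(y_r)=\sum_r\Pcal_{\mathrm{in}}A_r^\dagger\Pcal_{\mathrm{mid}}y_r$. Using self-adjointness and idempotence of $\Pcal_{\mathrm{mid}}$, this gives
\[
  \mathbf A^\dagger\mathbf A=\Pcal_{\mathrm{in}}\Bigl(\sum_rA_r^\dagger\Pcal_{\mathrm{mid}}A_r\Bigr)\Pcal_{\mathrm{in}} .
\]
In the same way, $\mathbf B^\dagger z=(\Pcal_{\mathrm{mid}}B_r^\dagger\Pcal_{\mathrm{out}}z)_r$, and therefore
\[
  \mathbf B\mathbf B^\dagger=\Pcal_{\mathrm{out}}\Bigl(\sum_rB_r\Pcal_{\mathrm{mid}}B_r^\dagger\Bigr)\Pcal_{\mathrm{out}} .
\]

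It follows that
\[
  \|\mathbf B\mathbf A\|^2\le\|\mathbf B\|^2\|\mathbf A\|^2=\|\mathbf B\mathbf B^\dagger\|\,\|\mathbf A^\dagger\mathbf A\| ,
\]
which is \eqref{eq:block-CS}.

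Nothing here is a real obstacle. The only step that needs care is checking that the adjoints are computed with the correct projections, so that the middle projection appears exactly once in each Gram sum. This holds because every projection involved is self-adjoint and idempotent. The argument could equally be phrased as a vector Cauchy--Schwarz inequality applied to $\langle z,\sum_rB_r\Pcal_{\mathrm{mid}}A_rx\rangle$ with $z$ and $x$ unit vectors in the ranges of the outer projections, but the block-operator formulation is the cleanest.
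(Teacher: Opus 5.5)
Your proposal is correct and is essentially the paper's proof. The paper also factors the operator as $\mathcal B\mathcal A$ through $\bigoplus_r\operatorname{ran}\Pcal_{\mathrm{mid}}$ and concludes from $\|\mathcal B\mathcal A\|^2\le\|\mathcal A^\dagger\mathcal A\|\,\|\mathcal B\mathcal B^\dagger\|$; the only difference is that the paper restricts domains to the ranges of the projections, whereas you keep the full spaces and build the projections into the maps, which gives the same norms.
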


\begin{proof}
Let $\mathcal A:\operatorname{ran}\Pcal_{\mathrm{in}}\to
\bigoplus_r\operatorname{ran}\Pcal_{\mathrm{mid}}$,
$\mathcal Ax=(\Pcal_{\mathrm{mid}}A_rx)_r$, and
$\mathcal B:\bigoplus_r\operatorname{ran}\Pcal_{\mathrm{mid}}\to
\operatorname{ran}\Pcal_{\mathrm{out}}$,
$\mathcal B(y_r)_r=\sum_r\Pcal_{\mathrm{out}}B_ry_r$.  Then
$\mathcal B\mathcal A$ is the operator on the left of \eqref{eq:block-CS},
while $\mathcal A^\dagger\mathcal A$ and $\mathcal B\mathcal B^\dagger$ are
the two operators on the right, and
$\|\mathcal B\mathcal A\|^2\le\|\mathcal A\|^2\|\mathcal B\|^2
=\|\mathcal A^\dagger\mathcal A\|\,\|\mathcal B\mathcal B^\dagger\|$.
\end{proof}

We apply the lemma to the factorized families of
Corollary~\ref{cor:factorization}.  For a leg $Z$ let
$\delta_Z\in\{1,-1\}$ be its grade shift, so that both $\Scal_r(Z)$ and
$\Scal_r(Z^\dagger)^\dagger=\sum_iu_i^\T\otimes Z_{ri}$ shift the grade by
$\delta_Z$.

\begin{corollary}
\label{cor:leg-CS}
For legs $Z,W$ and every $\nu\ge0$,
\begin{equation}
  \Bigl\|\Bigl(\sum_r\Scal_r(Z)\Scal_r(W^\dagger)^\dagger\Bigr)\Big|_{\Kcal_\nu}\Bigr\|^2
  \le\bigl\|\Gcal(W^\dagger)|_{\Kcal_\nu}\bigr\|
    \cdot\bigl\|\Gcal(Z)|_{\Kcal_{\nu+\delta_W+\delta_Z}}\bigr\| .
  \label{eq:leg-CS}
\end{equation}
\end{corollary}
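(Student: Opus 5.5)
This is a direct application of Lemma~\ref{lem:block-CS}. Take $\mathcal X=\mathcal Z=\R^d\otimes\Fcal$ and $\mathcal Y=\Fcal$, and set
\[
  A_r=\Scal_r(W^\dagger)^\dagger=\sum_iu_i^\T\otimes W_{ri},
  \qquad
  B_r=\Scal_r(Z).
\]
Take $\Pcal_{\mathrm{in}}=\Pcal_\nu$, $\Pcal_{\mathrm{mid}}=\pi_{\nu+\delta_W}$, and $\Pcal_{\mathrm{out}}=\Pcal_{\nu+\delta_W+\delta_Z}$. The grade bookkeeping is what makes the middle projection harmless. The map $A_r$ applies the leg $W$ at one site, so it sends $\Kcal_\nu$ into $\Hcal_{\nu+\delta_W}$, and $B_r$ shifts grade by $\delta_Z$. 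Hence on $\Kcal_\nu$ we have $\Pcal_{\mathrm{mid}}A_r\Pcal_{\mathrm{in}}=A_r\Pcal_{\mathrm{in}}$. Moreover, the full operator $\sum_r B_rA_r$ maps $\Kcal_\nu$ into $\Kcal_{\nu+\delta_W+\delta_Z}$. So the left side of \eqref{eq:block-CS} is exactly $\|(\sum_r\Scal_r(Z)\Scal_r(W^\dagger)^\dagger)|_{\Kcal_\nu}\|^2$.

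For the first factor on the right, $A_r^\dagger=\Scal_r(W^\dagger)$. The middle projection can again be dropped on the range of $A_r\Pcal_\nu$, so
\[
  \Pcal_\nu\Bigl(\sum_rA_r^\dagger\Pcal_{\mathrm{mid}}A_r\Bigr)\Pcal_\nu
  =\Pcal_\nu\Bigl(\sum_r\Scal_r(W^\dagger)\Scal_r(W^\dagger)^\dagger\Bigr)\Pcal_\nu
  =\Pcal_\nu\,\Gcal(W^\dagger)\,\Pcal_\nu .
\]
Since $\Gcal(W^\dagger)$ preserves grade, its norm equals $\|\Gcal(W^\dagger)|_{\Kcal_\nu}\|$.

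For the second factor, $B_r^\dagger=\Scal_r(Z)^\dagger$ shifts grade by $-\delta_Z$. It therefore maps $\Kcal_{\nu+\delta_W+\delta_Z}$ into $\Hcal_{\nu+\delta_W}$, which is the range of $\Pcal_{\mathrm{mid}}$. Consequently
\[
  \Pcal_{\mathrm{out}}\Bigl(\sum_rB_r\Pcal_{\mathrm{mid}}B_r^\dagger\Bigr)\Pcal_{\mathrm{out}}
  =\Pcal_{\mathrm{out}}\,\Gcal(Z)\,\Pcal_{\mathrm{out}},
\]
and since $\Gcal(Z)$ preserves grade, the norm of this is $\|\Gcal(Z)|_{\Kcal_{\nu+\delta_W+\delta_Z}}\|$.

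The only step needing care is checking the grade shifts. The map $\Scal_r(W^\dagger)^\dagger$ applies $W$ itself, since $(W^\dagger)^\dagger=W$, so it shifts grade by $\delta_W$. The map $\Scal_r(Z)$ applies $Z$ and shifts grade by $\delta_Z$. These match the sector indices in \eqref{eq:leg-CS}. If the target index is negative or exceeds $2mn$, both sides vanish and the inequality holds trivially.
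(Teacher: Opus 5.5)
Your proof is correct and follows the paper's argument: you apply Lemma~\ref{lem:block-CS} with the same choices $A_r=\Scal_r(W^\dagger)^\dagger$, $B_r=\Scal_r(Z)$ and the same three grade projections, and you use grade homogeneity to identify the left side. The only difference is cosmetic: you remove the middle projection from the two Gram factors exactly, getting equality with the compressed Gram operators, whereas the paper simply bounds them using $\pi_{\nu+\delta_W}\preceq I$.
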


\begin{proof}
Apply Lemma~\ref{lem:block-CS} with $A_r=\Scal_r(W^\dagger)^\dagger$,
$B_r=\Scal_r(Z)$, $\Pcal_{\mathrm{in}}=\Pcal_\nu$,
$\Pcal_{\mathrm{mid}}=\pi_{\nu+\delta_W}$, and
$\Pcal_{\mathrm{out}}=\Pcal_{\nu+\delta_W+\delta_Z}$.  By grade
homogeneity, $A_r\Pcal_\nu=\pi_{\nu+\delta_W}A_r\Pcal_\nu$ and
$B_r\pi_{\nu+\delta_W}=\Pcal_{\nu+\delta_W+\delta_Z}B_r\pi_{\nu+\delta_W}$,
so the left side of \eqref{eq:block-CS} is the left side of
\eqref{eq:leg-CS}.  On the right, since $\pi_{\nu+\delta_W}\preceq I$,
\[
  \sum_rA_r^\dagger\pi_{\nu+\delta_W}A_r\preceq\sum_rA_r^\dagger A_r=\Gcal(W^\dagger),
  \qquad
  \sum_rB_r\pi_{\nu+\delta_W}B_r^\dagger\preceq\sum_rB_rB_r^\dagger=\Gcal(Z);
\]
both Gram operators preserve grade, and compressing a Loewner inequality
between positive operators to a grade sector preserves it.
\end{proof}

\subsection{The four Gram budgets}

\begin{proposition}
\label{prop:gram-budgets}
For every $\kappa\ge0$, on $\Kcal_\kappa$,
\begin{equation}
  \Gcal(P^\dagger)=\Ghat\preceq(d+\kappa-1)\,I,\qquad
  \Gcal(P)\preceq(m+\kappa)\,I,\qquad
  \Gcal(R)\preceq\kappa\,I,\qquad
  \Gcal(R^\dagger)\preceq\kappa\,I .
  \label{eq:gram-budgets}
\end{equation}
Moreover $0\preceq D_0\preceq mI$ on $\R^d\otimes\Fcal$, while
$0\preceq D_1\preceq\kappa I$ and $0\preceq D_H\preceq\kappa I$ on
$\Kcal_\kappa$.
\end{proposition}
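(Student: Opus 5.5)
The proof assembles bounds already established, plus two direct estimates for the diagonal operators and for $\Gcal(P)$.

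\emph{The three budgets already proved.} The first budget is Lemma~\ref{lem:LSL}, equation \eqref{eq:LSL-grade}. The bounds $\Gcal(R)\preceq\kappa I$ and $\Gcal(R^\dagger)\preceq\kappa I$ on $\Kcal_\kappa$ are \eqref{eq:heavy-gram} of Corollary~\ref{cor:heavy-gram}.

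\emph{The diagonal operators.} Each of $D_0,D_1,D_H$ is a sum of commuting terms $U_{ii}\otimes\Pi_{ri}$, where $\Pi_{ri}$ is a lifted diagonal projection. Hence each is block diagonal in the pattern basis, and on $\R^d\otimes|\tau\rangle$ it acts as
\[
  \sum_{r}\sum_{i\in S^r(\tau)}u_iu_i^\T ,
\]
with $S^r(\tau)$ equal to $F_r(\tau)$, $S_1^r(\tau)$, or $S_1^r(\tau)\cup S_2^r(\tau)$, respectively. Positivity is immediate. For the upper bounds I will use the first inequality in \eqref{eq:subset-parseval} in two ways:
\begin{itemize}
\item for $D_0$, apply it once per row, giving at most $m$;
\item for $D_1$ and $D_H$, use $\|u_iu_i^\T\|\le1$ and count occupied sites, giving at most $|S_1(\tau)|+|S_2(\tau)|\le\gr(\tau)=\kappa$.
\end{itemize}

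\emph{The budget for $\Gcal(P)$.} By Corollary~\ref{cor:factorization}, $\Gcal(P)=D_0+H'$. Corollary~\ref{cor:directional} gives $\|H'|_{\Kcal_\kappa}\|\le\kappa$. Since $H'$ is self-adjoint and preserves grade, this means $H'\preceq\kappa I$ on $\Kcal_\kappa$. Together with $D_0\preceq mI$, this yields $\Gcal(P)\preceq(m+\kappa)I$.

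\emph{Sector invariance.} One point needs care: each Loewner inequality on $\Kcal_\kappa$ must be read as a statement about an operator that maps $\Kcal_\kappa$ into itself. This holds for every operator involved, because all of them preserve grade: the Gram operators of legs do, as noted after \eqref{eq:gram-operator}; $H'$ moves a light site, so it does; and the $D$'s are diagonal.

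There is no real obstacle here. The only subtle point is the row-wise application of the subset Parseval inequality for $D_0$. That step uses $m$ rather than a site count, since free sites are numerous.
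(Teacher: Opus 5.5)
Your proposal is correct and follows the paper's proof essentially step for step. The budgets for $\Ghat$, $\Gcal(R)$ and $\Gcal(R^\dagger)$ are cited from Lemma~\ref{lem:LSL} and Corollary~\ref{cor:heavy-gram}, the diagonal operators are handled through their block-diagonal action, and $\Gcal(P)=D_0+H'$ is combined with Corollary~\ref{cor:directional}. The only cosmetic difference is in $D_1$ and $D_H$: you bound site by site via $\|u_i\|^2\le1$ and $|S_1(\tau)|+|S_2(\tau)|\le\gr(\tau)$, whereas the paper applies the subset inequality \eqref{eq:subset-parseval} row by row and counts contributing rows; both give $\kappa$.
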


\begin{proof}
The first bound is Lemma~\ref{lem:LSL}, and the last two are
Corollary~\ref{cor:heavy-gram}.  The diagonal operators of
\eqref{eq:D-ops} are block diagonal in the pattern basis: on
$x\otimes|\tau\rangle$ they act on the external vector by
$\sum_r\sum_{i\in F_r(\tau)}u_iu_i^\T$, by $\sum_{(r,i)\in S_1(\tau)}u_iu_i^\T$,
and by $\sum_{(r,i)\in S_1(\tau)\cup S_2(\tau)}u_iu_i^\T$, respectively.
Each row contributes a matrix $\sum_{i\in S}u_iu_i^\T\preceq I_d$ by
\eqref{eq:subset-parseval}; there are $m$ rows in the first case, while in
the other two cases the number of contributing rows is at most the number
of occupied sites, which is at most $\gr(\tau)=\kappa$.  Finally,
$\Gcal(P)=D_0+H'$ by \eqref{eq:fact-light}, and since $H'$ is self-adjoint
with $\|H'|_{\Kcal_\kappa}\|\le\kappa$ by Corollary~\ref{cor:directional},
$H'\preceq\kappa I$ on $\Kcal_\kappa$.
\end{proof}

\subsection{Heavy--heavy terms}

By \eqref{eq:fact-plus}, $H_+=\sum_r\Scal_r(R)\Scal_r(R^\dagger)^\dagger$.
Corollary~\ref{cor:leg-CS} with $Z=W=R$ and
Proposition~\ref{prop:gram-budgets} give
\begin{equation}
  \|H_+|_{\Kcal_\nu}\|^2
  \le\|\Gcal(R^\dagger)|_{\Kcal_\nu}\|\cdot\|\Gcal(R)|_{\Kcal_{\nu+2}}\|
  \le\nu(\nu+2),
  \qquad\text{so}\qquad
  \|H_+|_{\Kcal_\nu}\|\le\nu+1 .
  \label{eq:Hplus}
\end{equation}
By \eqref{eq:fact-heavy} and Proposition~\ref{prop:gram-budgets}, on
$\Kcal_\nu$,
\begin{equation}
  -\nu I\preceq-D_H\preceq H_0=\Gcal(R)+\Gcal(R^\dagger)-D_H\preceq2\nu I,
  \qquad\text{so}\qquad
  \|H_0|_{\Kcal_\nu}\|\le2\nu .
  \label{eq:H0}
\end{equation}
Since $h/m=(b-1)/(sb)\le1/s$,
\begin{equation}
  \Bigl\|\frac hmH_+\Big|_{\Kcal_\nu}\Bigr\|\le\frac{\nu+1}{s},
  \qquad
  \Bigl\|\frac hmH_0\Big|_{\Kcal_\nu}\Bigr\|\le\frac{2\nu}{s}.
  \label{eq:heavy-normalized}
\end{equation}

\subsection{Light--light terms}

By \eqref{eq:fact-plus}, $L_+=\sum_r\Scal_r(P^\dagger)\Scal_r(P)^\dagger$.
Corollary~\ref{cor:leg-CS} with $Z=W=P^\dagger$ and
Proposition~\ref{prop:gram-budgets} give
\begin{equation}
  \Bigl\|\frac1mL_+\Big|_{\Kcal_\nu}\Bigr\|
  \le\frac1m\sqrt{\|\Gcal(P)|_{\Kcal_\nu}\|\cdot\|\Ghat|_{\Kcal_{\nu+2}}\|}
  \le\frac1m\sqrt{(m+\nu)(d+\nu+1)} .
  \label{eq:Lplus-bound}
\end{equation}
This is the geometric mean of the input budget $m+\nu$ and the output
budget $d+\nu+1$, divided by $m$.  For the grade-preserving term,
\eqref{eq:fact-light} and Proposition~\ref{prop:gram-budgets} give
\begin{equation}
  \Bigl\|\frac1mL_0\Big|_{\Kcal_\nu}\Bigr\|
  \le\frac{\|\Ghat|_{\Kcal_\nu}\|+\|D_1|_{\Kcal_\nu}\|+\|H'|_{\Kcal_\nu}\|}{m}
  \le\frac{(d+\nu-1)+2\nu}{m}
  \le\frac{3(d+\nu)}{m}\le\frac{3(d+\nu+1)}{m}.
  \label{eq:L0-bound}
\end{equation}

\begin{remark}[Sharpness at $\nu=0$]
\label{rem:Lplus-sharp}
The bound \eqref{eq:Lplus-bound} cannot be improved in general.  For
$x\in\R^d$, the vector $L_+(x\otimes\Omega)$ has, for each row $r$ and each
unordered pair $\{i,j\}$, the coefficient $(U_{ij}+U_{ji})x$ on the basis
vector with light sites $(r,i)$ and $(r,j)$, so a direct computation using
\eqref{eq:parseval-frame} gives
\[
  \|L_+(x\otimes\Omega)\|^2
  =m\Bigl[(d+1)\|x\|^2-2\sum_{i=1}^n\|u_i\|^2(u_i^\T x)^2\Bigr].
\]
Hence $\|m^{-1}L_+|_{\Kcal_0}\|^2\le(d+1)/m$, with equality in the limit of
frames with $\max_i\|u_i\|\to0$; this matches \eqref{eq:Lplus-bound} at
$\nu=0$.
\end{remark}

\subsection{Mixed terms}

By \eqref{eq:fact-mixed},
$X_+=\sum_r\Scal_r(P^\dagger)\Scal_r(R^\dagger)^\dagger$ and
$X_0=\sum_r\Scal_r(P^\dagger)\Scal_r(R)^\dagger$.  Corollary~\ref{cor:leg-CS}
with $(Z,W)=(P^\dagger,R)$ and $(P^\dagger,R^\dagger)$, together with
Proposition~\ref{prop:gram-budgets}, gives
\begin{equation}
  \|X_+|_{\Kcal_\nu}\|\le\sqrt{\nu(d+\nu+1)},\qquad
  \|X_0|_{\Kcal_\nu}\|\le\sqrt{\nu(d+\nu-1)}\le\sqrt{\nu(d+\nu)} .
  \label{eq:X-bounds}
\end{equation}
The bounds for $Y_+$ and $Y_0$ are those of Corollary~\ref{cor:directional}.
Since $X_0$ and $Y_0$ map $\Kcal_\nu$ into itself, $X_0^\dagger$ and
$Y_0^\dagger$ have the same norms on $\Kcal_\nu$.  To absorb the
coefficient $\sqrt h$ we use
\begin{equation}
  \frac{\sqrt h}{m}\sqrt{\nu(d+\nu+1)}
  \le\frac12\Bigl(\frac{d+\nu+1}{m}+\frac{h\nu}{m}\Bigr),
  \qquad
  \frac{h\nu}{m}\le\frac{\nu}{s},
  \qquad
  \frac{\sqrt h\,\nu}{m}\le\frac{\nu}{s},
  \label{eq:mixed-absorb}
\end{equation}
which follow from $2\sqrt{xy}\le x+y$, from $h/m\le1/s$, and from
$\sqrt{b-1}\le b=m/s$.  Consequently
\begin{align}
  \Bigl\|\frac{\sqrt h}m(X_++Y_+)\Big|_{\Kcal_\nu}\Bigr\|
  &\le\frac{d+\nu+1}{m}+\frac{3\nu}{2s},
  \label{eq:mixed-plus}\\
  \Bigl\|\frac{\sqrt h}m(X_0+X_0^\dagger+Y_0+Y_0^\dagger)\Big|_{\Kcal_\nu}\Bigr\|
  &\le\frac{d+\nu}{m}+(1+\sqrt2)\frac{\nu}{s} .
  \label{eq:mixed-zero}
\end{align}
(The first term in \eqref{eq:mixed-plus} could be halved; we keep the
stated form for the uniform table below.)

\subsection{The uniform band envelope}

Set
\begin{equation}
  \rho_\nu=\sqrt{\frac{d+\nu+1}{m}},\qquad
  \gamma_\nu=\frac{\nu+1}{s}.
  \label{eq:envelope-scales}
\end{equation}
Since $\sqrt{\nu/m}\le\rho_\nu$, the bound \eqref{eq:Lplus-bound} gives
\begin{equation}
  \Bigl\|\frac1mL_+\Big|_{\Kcal_\nu}\Bigr\|
  \le\rho_\nu\sqrt{1+\frac\nu m}
  \le\rho_\nu\Bigl(1+\sqrt{\frac\nu m}\Bigr)
  \le\rho_\nu+\rho_\nu^2 ,
  \label{eq:Lplus-envelope}
\end{equation}
with no restriction on the relative size of $\nu$ and $m$.  The estimates
of this section are collected in Table~\ref{tab:bands}.

\begin{table}[ht]
\centering
\renewcommand{\arraystretch}{1.3}
\begin{tabular}{ccccl}
\toprule
Grade shift & Light--light & Light--heavy & Heavy--heavy & Source\\
\midrule
$+2$ & $\rho_\nu+\rho_\nu^2$ & $\rho_\nu^2+\tfrac32\gamma_\nu$ & $\gamma_\nu$
  & \eqref{eq:Lplus-envelope}, \eqref{eq:mixed-plus}, \eqref{eq:heavy-normalized}\\
$0$  & $3\rho_\nu^2$ & $\rho_\nu^2+(1+\sqrt2)\gamma_\nu$ & $2\gamma_\nu$
  & \eqref{eq:L0-bound}, \eqref{eq:mixed-zero}, \eqref{eq:heavy-normalized}\\
$-2$ & $\rho_\nu+\rho_\nu^2$ & $\rho_\nu^2+\tfrac32\gamma_\nu$ & $\gamma_\nu$
  & adjoint of the $+2$ row\\
\bottomrule
\end{tabular}
\caption{Upper bounds on $\Kcal_\nu$ for the norms of the three components
of $B^+$, $B^0$, and $B^-$ in \eqref{eq:exact-inventory}, including the
normalization $1/m$ and the coefficients $\sqrt h$ and $h$.  Each entry of
the last row vanishes when $\nu<2$; for $\nu\ge2$ it is the corresponding
$+2$ entry at grade $\nu-2$, because $B^-|_{\Kcal_\nu}$ is the adjoint of
$B^+|_{\Kcal_{\nu-2}}$, and $\rho_\nu$, $\gamma_\nu$ are nondecreasing in
$\nu$.}
\label{tab:bands}
\end{table}

\begin{proposition}[Uniform band envelope]
\label{prop:band-envelope}
For every $\nu\ge0$ and $\Delta\in\{1,0,-1\}$,
\begin{equation}
  \|B^\Delta|_{\Kcal_\nu}\|
  \le\Cband\Bigl[\sqrt{\frac{d+\nu+1}{m}}+\frac{d+\nu+1}{m}+\frac{\nu+1}{s}\Bigr],
  \qquad \Cband=3+\sqrt2 .
  \label{eq:band-envelope}
\end{equation}
\end{proposition}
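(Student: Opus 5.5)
The plan is to sum the rows of Table~\ref{tab:bands} using the triangle inequality in \eqref{eq:exact-inventory}, and then compare each row total with $\Cband(\rho_\nu+\rho_\nu^2+\gamma_\nu)$. Throughout I use the notation \eqref{eq:envelope-scales}. The only facts needed are that $\nu\le\nu+1$ and $d+\nu\le d+\nu+1$, so every $\nu/s$ term is at most $\gamma_\nu$ and every $(d+\nu)/m$ term is at most $\rho_\nu^2$.

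\emph{The band $\Delta=1$.} Since $B^+|_{\Kcal_\nu}$ is $m^{-1}$ times $L_++\sqrt h(X_++Y_+)+hH_+$ restricted to $\Kcal_\nu$, the bounds \eqref{eq:Lplus-envelope}, \eqref{eq:mixed-plus} and \eqref{eq:heavy-normalized} give
\[
\|B^+|_{\Kcal_\nu}\|\le(\rho_\nu+\rho_\nu^2)+(\rho_\nu^2+\tfrac32\gamma_\nu)+\gamma_\nu
=\rho_\nu+2\rho_\nu^2+\tfrac52\gamma_\nu .
\]
This is at most $\Cband(\rho_\nu+\rho_\nu^2+\gamma_\nu)$, because $3+\sqrt2$ exceeds each of the coefficients $1$, $2$, $5/2$.

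\emph{The band $\Delta=0$.} Combining \eqref{eq:L0-bound}, \eqref{eq:mixed-zero} and \eqref{eq:heavy-normalized} gives
\[
\|B^0|_{\Kcal_\nu}\|\le3\rho_\nu^2+\rho_\nu^2+(1+\sqrt2)\gamma_\nu+2\gamma_\nu
=4\rho_\nu^2+(3+\sqrt2)\gamma_\nu .
\]
This is also bounded by $\Cband[\rho_\nu+\rho_\nu^2+\gamma_\nu]$, since $4\le3+\sqrt2$ and $\rho_\nu\ge0$. The $\gamma_\nu$ coefficient of this band equals $\Cband$ exactly, which is what fixes the constant.

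\emph{The band $\Delta=-1$.} If $\nu<2$, then $B^-|_{\Kcal_\nu}=0$ because $\Kcal_{\nu-2}=0$. For $\nu\ge2$, the identity $(B^+)^\dagger=B^-$ together with the grade structure shows that $B^-|_{\Kcal_\nu}$ is the adjoint of $B^+|_{\Kcal_{\nu-2}}$, so the two have the same norm. The $\Delta=1$ bound at grade $\nu-2$ is then at most the same expression at grade $\nu$, since $\rho_\nu$ and $\gamma_\nu$ are nondecreasing in $\nu$.

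There is no real obstacle. The only point requiring care is the adjoint step: the band $B^-$ must be restricted to the correct grade sector, and monotonicity in $\nu$ must be invoked to move from grade $\nu-2$ back to grade $\nu$.
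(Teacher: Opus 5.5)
Your proposal is correct and follows the paper's proof: you apply the triangle inequality to \eqref{eq:exact-inventory} with the entries of Table~\ref{tab:bands}, obtaining $\rho_\nu+2\rho_\nu^2+\tfrac52\gamma_\nu$ for $B^{+}$ and $4\rho_\nu^2+(3+\sqrt2)\gamma_\nu$ for $B^0$, and observe that every coefficient is at most $3+\sqrt2$. Your handling of $B^-$ (vanishing for $\nu<2$, and otherwise the adjoint of $B^+|_{\Kcal_{\nu-2}}$ combined with monotonicity of $\rho_\nu,\gamma_\nu$) is the same argument the paper records in the caption of Table~\ref{tab:bands}.
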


\begin{proof}
By \eqref{eq:exact-inventory}, the triangle inequality, and
Table~\ref{tab:bands},
$\|B^\pm|_{\Kcal_\nu}\|\le\rho_\nu+2\rho_\nu^2+\tfrac52\gamma_\nu$ and
$\|B^0|_{\Kcal_\nu}\|\le4\rho_\nu^2+(3+\sqrt2)\gamma_\nu$.  Every
coefficient is at most $3+\sqrt2$.
\end{proof}

\section{Proof of the main theorem}
\label{sec:proof}

\subsection{The moment bound}

Fix an integer $q\ge1$, let $D_q=d+2q+1$, and define
\begin{equation}
  \beta_q=\Cband\Bigl[\sqrt{\frac{D_q}{m}}+\frac{D_q}{m}+\frac{2q+1}{s}\Bigr].
  \label{eq:beta}
\end{equation}
By Proposition~\ref{prop:band-envelope},
$\|B^\Delta|_{\Kcal_\nu}\|\le\beta_q$ for all $\Delta$ and all
$0\le\nu\le2q$.

\begin{theorem}[Moment bound]
\label{thm:moment}
For every integer $q\ge1$,
\begin{equation}
  \E\tr(E_\iid^{2q})\le d\,(3\beta_q)^{2q}
  \qquad\text{and}\qquad
  \E\tr(E_\SStack^{2q})\le d\,(3c_b^2\beta_q)^{2q}.
  \label{eq:moment-bounds}
\end{equation}
\end{theorem}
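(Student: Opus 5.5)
The plan is to start from the vacuum-moment identity of Proposition~\ref{prop:vacuum-moment} with $k=2q$, substitute $\Mcal=B^++B^0+B^-$ from \eqref{eq:bands}, and expand $\Mcal^{2q}$ into the $3^{2q}$ words $B^{\Delta_{2q}}\cdots B^{\Delta_1}$ with $\Delta_t\in\{1,0,-1\}$. For each $\alpha$ the vector $q_\alpha\otimes\Omega$ lies in $\Kcal_0$. Applying the word letter by letter, the partial products stay in the sectors $\Kcal_{\nu_t}$ with $\nu_t=2\sum_{s\le t}\Delta_s$. The first step is to observe which words survive. If some $\nu_t<0$, the word vanishes, since $\Kcal_\nu=0$ for $\nu<0$. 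If $\nu_{2q}\ne0$, the final vector is orthogonal to $\Kcal_0$, so the matrix element vanishes. Hence only words whose grade path starts and ends at $0$ and stays nonnegative contribute.

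For these, the second step is to bound the grade along the path. A path with increments in $\{0,\pm2\}$ that returns to $0$ after $2q$ steps satisfies $\nu_t\le 2\min(t,2q-t)\le2q$ for all $t$. So each letter $B^{\Delta_t}$ acts on a vector in $\Kcal_{\nu_{t-1}}$ with $0\le\nu_{t-1}\le 2q$. We can therefore use $\|B^{\Delta}|_{\Kcal_\nu}\|\le\beta_q$, which follows from Proposition~\ref{prop:band-envelope} because the envelope is nondecreasing in $\nu$ and equals $\beta_q$ at $\nu=2q$.

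The third step is to combine these bounds. Cauchy--Schwarz together with the submultiplicative chain of restricted norms gives
\[
  \bigl|\langle q_\alpha\otimes\Omega,\,B^{\Delta_{2q}}\cdots B^{\Delta_1}(q_\alpha\otimes\Omega)\rangle\bigr|\le\beta_q^{2q}.
\]
Summing over at most $3^{2q}$ words and over $\alpha\in[d]$ gives $\E\tr(E_\iid^{2q})\le d(3\beta_q)^{2q}$. The second inequality in \eqref{eq:moment-bounds} then follows directly from Proposition~\ref{prop:transfer}:
\[
  \E\tr(E_\SStack^{2q})\le c_b^{4q}\,d(3\beta_q)^{2q}=d\,(3c_b^2\beta_q)^{2q}.
\]

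The only delicate point is the grade bookkeeping. The restricted-norm chain must be applied only to sectors actually visited, which the projection structure $B^\Delta\Pcal_\nu=\Pcal_{\nu+2\Delta}B^\Delta\Pcal_\nu$ ensures. We also need the bound $\nu\le2q$ along surviving paths, and not merely $\nu\le4q$. That bound is exactly what the return-to-vacuum constraint provides, and it is why $D_q=d+2q+1$ suffices.
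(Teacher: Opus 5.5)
Your proposal is correct and follows essentially the same route as the paper. You expand the vacuum matrix element into the $3^{2q}$ band words and keep only nonnegative grade paths that return to $0$, which forces $\nu_t\le2q$. You then bound each letter by $\beta_q$ using Proposition~\ref{prop:band-envelope} and finish with the moment transfer of Proposition~\ref{prop:transfer}.
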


\begin{proof}
Let $v_\alpha=q_\alpha\otimes\Omega\in\Kcal_0$.  Inserting
$I=\sum_\nu\Pcal_\nu$ between consecutive factors of $\Mcal^{2q}$ in
\eqref{eq:vacuum-moment} and using $\Mcal=B^++B^0+B^-$ with
$B^\Delta\Kcal_\nu\subseteq\Kcal_{\nu+2\Delta}$,
\[
  \langle v_\alpha,\Mcal^{2q}v_\alpha\rangle
  =\sum_{(\Delta_1,\ldots,\Delta_{2q})\in\{1,0,-1\}^{2q}}
   \langle v_\alpha,B^{\Delta_{2q}}\cdots B^{\Delta_1}v_\alpha\rangle ,
\]
where the word indexed by $(\Delta_t)$ visits the grades
$\nu_t=2\sum_{u\le t}\Delta_u$ and vanishes unless $\nu_{2q}=0$ and
$\nu_t\ge0$ for all $t$.  A nonvanishing word has the same number $k\le q$
of upward and downward steps, so $0\le\nu_t\le2k\le2q$ at all times.  For
such a word, submultiplicativity gives
\[
  |\langle v_\alpha,B^{\Delta_{2q}}\cdots B^{\Delta_1}v_\alpha\rangle|
  \le\prod_{t=1}^{2q}\|B^{\Delta_t}|_{\Kcal_{\nu_{t-1}}}\|\le\beta_q^{2q}.
\]
There are $3^{2q}$ words; summing over them and over $\alpha\in[d]$ gives
the first bound in \eqref{eq:moment-bounds}, and
Proposition~\ref{prop:transfer} gives the second.
\end{proof}

\subsection{From moments to probability}

For a real symmetric matrix $A$, $\|A\|^{2q}\le\tr(A^{2q})$, because the
right side is the sum of the $2q$-th powers of the eigenvalues.  Markov's
inequality and Theorem~\ref{thm:moment} therefore give
\begin{equation}
  \Prob\{\|E_\SStack\|>\varepsilon\}
  \le\frac{\E\tr(E_\SStack^{2q})}{\varepsilon^{2q}}
  \le d\Bigl(\frac{3c_b^2\beta_q}{\varepsilon}\Bigr)^{2q}.
  \label{eq:Markov}
\end{equation}
Now let $q$, $s$, $b$, and $m$ be as in \eqref{eq:qD} and
\eqref{eq:integer-parameters}.  Then
\begin{equation}
  m\ge\frac{\Lambda^2D_q}{\varepsilon^2},\qquad
  s\ge\frac{\Lambda(2q+1)}{\varepsilon},
  \label{eq:scale-conditions}
\end{equation}
so that $\sqrt{D_q/m}\le\varepsilon/\Lambda$,
$(2q+1)/s\le\varepsilon/\Lambda$, and
$D_q/m\le\varepsilon^2/\Lambda^2\le\varepsilon/\Lambda$, using
$\varepsilon\le1\le\Lambda$.  Hence
\begin{equation}
  \beta_q\le\frac{3\Cband\,\varepsilon}{\Lambda}
  =\frac{\Cband\,\varepsilon}{30c_\star^2},
  \qquad
  \frac{3c_b^2\beta_q}{\varepsilon}\le\frac{\Cband}{10}
  =\frac{3+\sqrt2}{10}<0.442<\frac12 ,
  \label{eq:beta-small}
\end{equation}
where the second inequality uses $c_b\le c_\star$ from
Lemma~\ref{lem:coupling}(iii).  By \eqref{eq:Markov},
\begin{equation}
  \Prob\{\|E_\SStack\|>\varepsilon\}\le d\,4^{-q}\le\frac{\delta^2}{d}\le\delta ,
  \label{eq:failure-half}
\end{equation}
since $2^{-q}\le\delta/d$ by the definition of $q$, and $\delta\le1\le d$.
This proves \eqref{eq:main-prob}; \eqref{eq:ose} is the equivalent
quadratic-form statement noted in Section~\ref{sec:intro}.

\subsection{Integer rounding and explicit constants}

Put $\lambda=\Lambda(2q+1)/\varepsilon$; then $\lambda\ge1$ and
$s=\lceil\lambda\rceil\le\lambda+1\le2\lambda$.  Since $2q+1\le3q$,
\begin{equation}
  s\le\frac{\Lambda(2q+1)}{\varepsilon}+1\le\frac{6\Lambda q}{\varepsilon}
  <\frac{1356\,q}{\varepsilon}.
  \label{eq:s-rounding}
\end{equation}
Since $b=\lceil M_0/s\rceil$, we have $M_0\le m=sb<M_0+s$, and
\begin{equation}
  \frac{s}{M_0}\le\frac{2\Lambda(2q+1)/\varepsilon}{\Lambda^2D_q/\varepsilon^2}
  \le\frac{2\varepsilon}{\Lambda}<1 ,
  \label{eq:rounding-ratio}
\end{equation}
using $2q+1\le D_q$.  In particular $M_0/s>1$, so
\begin{equation}
  b=\Bigl\lceil\frac{M_0}{s}\Bigr\rceil\ge2 ,
  \label{eq:b-at-least-two}
\end{equation}
which is the fact used in Section~\ref{sec:fock-model}, and
$m<M_0+s<2M_0$.  Using $2M_0<3\Lambda^2D_q/\varepsilon^2$ and
$D_q=d+2q+1\le2(d+q)$,
\begin{equation}
  m<\frac{6\Lambda^2(d+q)}{\varepsilon^2}<\frac{306456\,(d+q)}{\varepsilon^2}.
  \label{eq:m-rounding}
\end{equation}
This proves \eqref{eq:explicit-size} and completes the proof of
Theorem~\ref{thm:main}.

\section{Scope and limitations}
\label{sec:scope}

The proof uses, in order: the coupling and moment transfer of
Section~\ref{sec:convex-transfer}; the vacuum-moment identity
\eqref{eq:vacuum-moment} and the band inventory \eqref{eq:exact-inventory};
the light-sector inequality, Lemma~\ref{lem:LSL}; the row-local estimates
of Section~\ref{sec:hop}; the block Cauchy--Schwarz bounds of
Section~\ref{sec:band-bounds}, summarized in
Proposition~\ref{prop:band-envelope}; and the moment expansion, Markov
inequality, and rounding of Section~\ref{sec:proof}.  Only the last step
uses the specific parameter choice \eqref{eq:integer-parameters}, and only
through \eqref{eq:scale-conditions} and \eqref{eq:b-at-least-two}.

The following restrictions are part of the statement.
\begin{itemize}
\item All hash values and signs are independent.  The $2q$-th moment of the
  Gram error involves up to $4q$ column selectors, so moment matching of
  order $2q$ at the level of the Gram error does not follow from $2q$-wise
  independence of the primitive hashes and signs; a limited-independence
  version would require a separate joint-moment computation for the chosen
  encoding.
\item The matrix $U$ is fixed, or independent of $\Pi$
  (Remark~\ref{rem:independent-U}); an adaptively chosen $U$ is not
  covered.
\item The result concerns the SparseStack distribution
  \eqref{eq:sparsestack}, not an arbitrary sparse embedding with $s$
  nonzeros per column.
\item The constants have not been optimized.  The dominant losses are the
  factor $c_\star^2<2.51$ from Proposition~\ref{prop:transfer}, the factor
  $3$ from the number of bands, and the factor $3$ from the three terms of
  the envelope; together with $\Cband$ they account for the choice
  $\Lambda=90c_\star^2$ in \eqref{eq:global-constants}.
\end{itemize}

\appendix

\section{Verification status}
\label{app:lean}

\paragraph{Version history.}
Versions through Version~1.6.1 circulated under the title ``A Sparse-Fock
Proof of the Upper Edge for Fully Independent SparseStack.''  The main
theorem, parameter choices, final band envelope, and final numerical constants
are unchanged; Version~1.8 reorganizes and expands the exposition.
Version~2.0 revises the abstract and adds related-work attributions in
Sections~1.3, \ref{sec:convex-transfer}, and
\ref{sec:light-sector}; the mathematical results and proofs are unchanged.

The fixed-matrix theorem of this paper has been formalized in Lean~4
\cite{deMouraUllrich2021}.  The artifact uses Lean~4.33.0 and mathlib
\cite{Mathlib2020} commit
\texttt{db584cd6d46c92f209a44c0f1c829460d327499d}.  The statements below
concern that artifact: ``kernel-checked'' means that Lean accepted the stated
declarations after elaboration, not that Lean checked the prose of this paper.

\paragraph{Availability.}
The \href{https://github.com/DiarHaidary/nelson-nguyen-sparse-fock}
{project repository} contains the canonical Version~1.5 manuscript, the formal
source, and the audit record.  This correspondence audit used immutable
repository commit
\href{https://github.com/DiarHaidary/nelson-nguyen-sparse-fock/tree/b9da2b5d96e5fd70252e9c6551aefc92643e7d0b}
{\texttt{b9da2b5d96e5}} (the link records the full hash).  From the repository
root, the artifact builds with \texttt{cd formal \&\& lake build}.

\paragraph{Main statement.}
The final matrix-level declaration is
\nolinkurl{SparseFock.MainTheorem.fully_independent_sparseStack_matrix}.
Its hypotheses are $d\ge1$, $0<\varepsilon\le1$, $0<\delta\le1$, a real
matrix $U$, and $U^\T U=I_d$; its conclusion packages the SparseStack law,
the failure bound \eqref{eq:main-prob}, the two-sided event \eqref{eq:ose},
exact column sparsity, the bounds \eqref{eq:explicit-size}, $b\ge2$, and
$m=sb$.  The declaration has no further hypotheses.  The source signature is
given in
\href{https://github.com/DiarHaidary/nelson-nguyen-sparse-fock/blob/b9da2b5d96e5fd70252e9c6551aefc92643e7d0b/formal/SparseFockFormal/MainTheorem.lean}
{\texttt{formal/SparseFockFormal/MainTheorem.lean}}.
The frame-level endpoint is
\nolinkurl{SparseFock.MainTheorem.fully_independent_sparseStack}, and the
finite-law conditioning corollary for a random frame independent of the sketch
(Remark~\ref{rem:independent-U}) is
\begin{center}
\nolinkurl{SparseFock.MainTheorem.fully_independent_sparseStack_independent_frame}.
\end{center}

\paragraph{Axiom audit.}
Running \texttt{\#print axioms} on the main declaration reports only
\texttt{propext}, \texttt{Classical.choice}, and \texttt{Quot.sound}.  A
source sweep confirms that the project contains none of \texttt{sorry},
\texttt{admit}, project-defined \texttt{axiom}, \texttt{unsafe},
\texttt{partial}, \texttt{opaque}, \texttt{implemented\_by}, and
\texttt{native\_decide}.

\paragraph{Correspondence with the paper.}
The Lean artifact was developed against Version~1.5.  Direct comparison with
Theorem~\ref{thm:main} in Version~2.0 confirms the same real, fully independent
signed-hash SparseStack law and normalization; the same definitions of
$q,D_q,s,M_0,b,m$; the same constants $c_\star,\Cband,\Lambda$; and the same
operator-failure event, two-sided OSE event, exact column sparsity, strict
dimension bounds, $b\ge2$, and $m=sb$.  Version~2.0 states $n\ge1$
explicitly; in the Lean theorem this follows from $d\ge1$ and $U^\T U=I_d$.

The proof organization is not line-by-line identical.  The three directional
uses and two heavy-leg uses of the merged row-local hop
Lemma~\ref{lem:hop} are kernel-checked separately in
\texttt{DirectionalConcrete}, \texttt{DirectionalNormalized}, and
\texttt{HeavyBandsConcrete}.  The row-operator notation
$\Scal_r(Z),\Gcal(Z)$ of Section~\ref{sec:row-operators} packages concrete
factorizations checked in \texttt{TypedBlockCS} and the concrete heavy, light,
and mixed band modules.  The general vacuum-moment identity is
formalized; Example~\ref{ex:k2} is explanatory.  The final band envelope is
checked in \texttt{ConcreteBandEnvelope}, and the moment expansion, Markov
step, Gram/OSE equivalence, and integer rounding are checked in the endgame
modules.  Thus the fixed-matrix theorem and every load-bearing concrete
estimate used to prove it are kernel-checked, while some generic notation,
merged lemmas, examples, and sharper non-load-bearing remarks are expository
consolidations in this version.

\paragraph{What is not verified.}
The kernel check certifies the formal derivation.  The theorem-level
correspondence reported above is a human comparison, not a judgment made by
Lean.  Lean does not certify the prose, the explanatory examples, attribution,
novelty, or the claim that the formal statement captures the intended external
problem; those remain matters for conventional review.

\section{Notation index}
\label{app:notation}

\begin{center}
\small
\renewcommand{\arraystretch}{1.15}
\begin{tabular}{>{\raggedright\arraybackslash}p{0.24\textwidth}
                >{\raggedright\arraybackslash}p{0.68\textwidth}}
\toprule
Symbol & Meaning\\
\midrule
\multicolumn{2}{l}{\emph{Model and parameters}}\\
$n,d,U,u_i,U_{ij}$ & Ambient dimension, subspace dimension, fixed
$U\in\R^{n\times d}$ with $U^\T U=I_d$, its rows $u_i\in\R^d$, and
$U_{ij}=u_iu_j^\T$.\\
$\varepsilon,\delta,q,D_q,M_0$ & Distortion, failure probability,
$q=\max\{1,\lceil\log_2(d/\delta)\rceil\}$, $D_q=d+2q+1$,
$M_0=\Lambda^2D_q/\varepsilon^2$.\\
$s,b,m,h$ & Blocks (nonzeros per column), rows per block, total rows
$m=sb$, and $h=b-1$.\\
$H_g,h_g(i),\sigma_{gi},\Pi$ & CountSketch blocks, hash rows, signs, and the
SparseStack matrix \eqref{eq:sparsestack}.\\
$c_\star,\Cband,\Lambda$ & The constants \eqref{eq:global-constants}.\\
$\xi_{gi},\zeta_{gi},p_0,c_b$ & Signed one-hot selector, its
independent-entry proxy, $p_0=\Prob\{\zeta=0\}$, and $c_b=(1-p_0)^{-1}$.\\
$\eta_{ri}$ & Three-point entries \eqref{eq:eta-law} indexed by sites.\\
$E_\SStack,E_\iid$ & Hollow Gram errors \eqref{eq:ss-hollow} and
\eqref{eq:iid-row}.\\
\midrule
\multicolumn{2}{l}{\emph{Tensor-product model}}\\
$\sfI,(r,i)$ & Site set $[m]\times[n]$ and a site.\\
$\Fcal,\Omega,e_0,e_1,e_2$ & Product space $(\R^3)^{\otimes mn}$, vacuum,
and the local basis \eqref{eq:one-site-basis}.\\
$J,P,P^\dagger,R,R^\dagger$ & Jacobi matrix \eqref{eq:Jacobi} and its four
legs \eqref{eq:local-legs}; $Z_{ri}$ is the lift of $Z$ to site $(r,i)$.\\
$\tau,|\tau\rangle,\gr(\tau)$ & Pattern, basis vector, and grade
\eqref{eq:pattern-grade}.\\
$S_j(\tau),S_j^r(\tau),F_r(\tau)$ & Level-$j$ sites, level-$j$ columns of
row $r$, free columns of row $r$.\\
$\tau[a\mapsto j]$ & Pattern with site $(r,a)$ reassigned to level $j$
(row $r$ fixed by context).\\
$\nu_r(\tau),\Ncal_r$ & Grade of row $r$ and the corresponding operator.\\
$\Hcal_\nu,\Kcal_\nu,\pi_\nu,\Pcal_\nu$ & Grade-$\nu$ sector of $\Fcal$,
$\Kcal_\nu=\R^d\otimes\Hcal_\nu$, and the orthogonal projections onto
them.\\
$\Mcal,B^+,B^0,B^-$ & Operator \eqref{eq:M} and its bands \eqref{eq:bands}.\\
$\sfK_{T,\ell},|S;T\rangle,F_T$ & Block with heavy set $T$ and $\ell$
light sites, its basis vectors, and $\sfI\setminus T$.\\
\midrule
\multicolumn{2}{l}{\emph{Operators}}\\
$L_\bullet,H_\bullet,X_\bullet,Y_\bullet$ & Families
\eqref{eq:light-heavy-defs}--\eqref{eq:mixed-defs} in the inventory
\eqref{eq:exact-inventory}.\\
$\Scal_r(Z),\Gcal(Z)$ & Row operator and Gram operator
\eqref{eq:row-operator}--\eqref{eq:gram-operator}.\\
$\Ghat,H'$ & $\Gcal(P^\dagger)$ and the transpose light hop \eqref{eq:Ghat}.\\
$D_0,D_1,D_H$ & Diagonal operators \eqref{eq:D-ops}.\\
$\rho_\nu,\gamma_\nu,\beta_q$ & Envelope scales \eqref{eq:envelope-scales}
and the ladder constant \eqref{eq:beta}.\\
\bottomrule
\end{tabular}
\end{center}

\phantomsection

\end{document}